\newcommand{\otimesmin}{\otimes_{\mathrm{min}}}
\newcommand{\otimesmax}{\otimes_{\mathrm{max}}}
\newcommand{\kk}{\mathbf k}
\newcommand{\<}{\langle}
\renewcommand{\>}{\rangle}
\newcommand{\Tr}{\mathrm{Tr\,}}
\newcommand{\ou}{\mathbb{1}}
\newtheorem{thm}{Theorem}[section]
\newtheorem*{thm*}{Theorem}
\newtheorem{lemma}[thm]{Lemma}
\newtheorem{cor}[thm]{Corollary}
\newtheorem{prop}[thm]{Proposition}
\theoremstyle{remark}
\newtheorem{ex}[thm]{Example}
\newtheorem{remark}[thm]{Remark}
\title{Assemblages and steering in general probabilistic theories}
\author{Anna Jen\v {c}ov\'a}
\email{jenca@mat.savba.sk}
\address{Mathematical Institute, Slovak Academy of Sciences, Bratislava, Slovakia}
\begin{document}

\maketitle

\begin{abstract} We study steering in the framework of general probabilistic theories. 
We show that for dichotomic assemblages, steering can be characterized in terms of  a  certain tensor
cross norm, which is also related to a steering degree given by
steering robustness. Another contribution is the observation that steering in GPTs can be
conveniently treated using Choquet theory for probability measures on the state space. 
In particular, we find a variational expression  for  universal steering degree for
dichotomic assemblages and obtain conditions characterizing unsteerable states analogous
to some conditions recently found for the quantum case. The setting also enables us to
rather easily extend the results to infinite dimensions and arbitrary numbers of
measurements with arbitrary outcomes.

\end{abstract}

\section{Introduction}

 EPR steering was first described by Schr\"odinger in 1936
 \cite{schrodinger1936probability} as a  bipartite scenario where  one party can steer the
 state of a distant party by performing local measurements,  in a way that cannot be
 explained by classical correlations.  A precise definition and a systematic treatment was
 given in   \cite{wiseman2007steering, jones2007entanglement}, where quantum steering was interpreted in operational terms, as the
possibility of certifying entanglement when  one party is untrusted. 
In this way, steering became an important resource in quantum information theory and  has attracted a lot of attention due to
applications as well as for its relations to other
nonclassical phenomena, such as entanglement, Bell nonlocality or incompatibility of
measurements. In recent years, methods for  characterizing and quantifying steering were developed in the literature
that can be efficiently evaluated by SDPs, see \cite{cavalcanti2016quantum} for an
overview. For a recent
review  of quantum steering see \cite{uola2020quantum}.

Quantum steering can be described as follows. Assume Alice and Bob share a bipartite state
$\rho_{AB}$. After Alice performs a measurement $M_x$  on her side  and
obtains the result $a$, Bob is left with the conditional (nonnormalized) state
$\rho_{a|x}=(M_{a|x}\otimes id)(\rho_{AB})$. 
The assemblage $\{\rho_{a|x}\}$ of conditional states admits a local hidden state (LHS) model if there is an
ensemble of states $\{\lambda(\omega),\rho_\omega\}$ from which $\rho_{a|x}$ are obtained by a set of conditional
probabilities $\{p(a|x,\omega)\}$, in this case Bob is not convinced that the state
$\rho_{AB}$ was entangled. If no LHS model exists, Bob can be sure of both entanglement
and incompatibility of Alice's measurements. However, there are entangled states that
are unsteerable, which means that the assemblages  obtained by any measurement always
have a LHS model, this was already observed in \cite{wiseman2007steering, jones2007entanglement}. 
Some steerability criteria were recently obtained in
\cite{nguyen2018quantum,nguyen2019geometry,nguyen2020some} through a geometric approach.

Nonclassical  phenomena are often studied in a broader framework of general probabilistic
theories (GPTs) \cite{barrett2007information}, which was used for better understanding the
operational features of Bell nonlocality, incompatibility of measurements and steering,
and their relations
\cite{barnum2013ensemble,busch2014steering,banik2015measurement,plavala2016allmeasurements, plavala2017conditions,jencova2018incompatible}.
GPTs  include the classical and quantum theory, as well as the PR boxes exhibiting maximal
violations of Bell inequalities  \cite{popescu1994quantum}, quantum channels
\cite{jencova2017conditions} or post quantum
steering \cite{cavalcanti2021witworld}. Another
motivation comes from quantum foundations, where the aim is to characterize quantum theory
among physical theories. 

In the GPT setting, the relations of these phenomena to some  mathematical concepts can be
revealed. For example,  nonclassical correlations can be expressed in
terms of tensor cross norms in Banach spaces \cite{aubrun2018universal}. In \cite{bluhm2020incompatibility},
incompatibility of measurements in GPTs is characterized and quantified using different
mathematical points of view: extendability of maps, theory of generalized spectrahedra and
tensor cross norms. 

The aim of the present work is to study steering in GPTs. We first restrict to dichotomic
steering, where Alice uses only dichotomic measurements. In this case, we show that
steering is characterized by tensor cross norms in very much the same way as obtained in
\cite{bluhm2020incompatibility}. In fact, the connection is immediate from the formal
relation of the two notions of steering and incompatibility: in the framework of GPTs they
can be seen as equivalent. The results then follow straightforwardly from those in
\cite{bluhm2020incompatibility}, but we give different proofs, more clearly related to the
structure of tensor products with hypercubic state spaces. 

In the second part, we show that steering can be represented using Choquet order on the
set of Radon probability measures over the state space.
More precisely, the assemblages are  represented by sets of (simple)
probability measures and a LHS model is  given by another probability
measure that is an upper bound for the assemblage in Choquet order. We then obtain a
convenient description of LHS models and steering using classical results in Choquet
theory, which naturally includes infinite assemblages.
In particular, we show that we can always use an LHS model concentrated on the extreme
boundary of the state space, equivalently, the LHS model is given by a boundary
measure.  Further, we may assume that this measure is invariant under the group of
transformations leaving the assemblage invariant.  Using this representation, we
find a variational expression for the steering degree for dichotomic
assemblages that is independent of the size of the assemblage. In the quantum case, 
this expression coincides  with the  quantum steering constants  obtained in \cite{bluhm2021maximal}
and relates to the 1-summing constant for centrally symmetric state spaces.
We also find  characterizations of (one way)  unsteerable states in a GPT,
 which are similar to those obtained in \cite{nguyen2018quantum, nguyen2019geometry} in
 the quantum case.

\section{Notations and preliminaries}

In this paper we follow the general assumptions and formalism of GPTs used in
\cite{lami2018non}. See e.g. \cite{plavala2021general}
for another recent exposition with a convenient diagrammatic presentation.

A system of a general probabilistic theory (GPT) is a triple $(V,V^+,\ou)$,
where $V$ is a real vector space of finite dimension, $V^+$ is a closed convex cone which is pointed 
($V^+\cap(-V^+)=\{0\}$) and separating $(V=V^+-V^+)$ and $\ou$ is a distinguished element of
the dual vector space $V^*=A$. In $A$, the dual cone $A^+$ is defined as
\[
A^+=\{f\in A,\ \<f,v\>\ge 0,\ \forall v\in V^+\},
\]
here $\<\cdot,\cdot\>$ denotes the duality $A\times V\to \mathbb R$. We will use the
notation $\le$ for the ordering induced by $V^+$ or $A^+$ in their respective spaces. In
this  ordering, $\ou$ is assumed to be an order unit, which means that for all $f\in A$
there is some $\alpha>0$ such that $\alpha\ou\pm f\in A^+$, or, equivalently, that $\ou$ is an
interior element in $A^+$. The order unit norm in $A$ is defined as
\[
\|f\|_\ou=\inf\{\lambda>0,\ \lambda\ou\pm f\in A^+\}.
\]

The subset
\[
K:=\{\rho\in V^+,\ \<\ou,\rho\>=1\}.
\]
is interpreted as the set of states and called the state space. Clearly, $K$
 is a compact convex subset of $V$ and a base of the cone  $V^+$. The dual of
 $\|\cdot\|_\ou$ in $V$ is the base norm 
\[
\|v\|_V=\inf\{\<\ou,v_++v_-\>,\ v=v_+-v_-,\ v_+,v_-\in V^+\} = \sup_{\pm f\le \ou}
\<f,v\>.
\]

Elements of the unit interval $E=\{f\in A,\ 0\le f\le \ou\}$ are called effects.
The effects can be identified with affine maps $K\to [0,1]$ and are interpreted as
dichotomic (or yes-no) measurements of the system: the value $\<f,\rho\>\in [0,1]$ gives the
probability that the measurement represented by the effect $f$ gives outcome ''yes'' if
the system is in the state $\rho\in K$. Similarly, measurements with $n$ outcomes are
represented by collections $f_1,\dots,f_n\in E$, $\sum_i f_i=\ou$, where $\<f_i,\rho\>$ is
interpreted as the probability of i-th outcome in the state $\rho$.

Given two systems $(V_A,V_A^+,\ou_A)$ and $(V_B,V_B^+,\ou_B)$, the composite system is
$(V_{AB}, V_{AB}^+,\ou_{AB})$, where $V_{AB}=V_A\otimes V_B$, $\ou_{AB}=\ou_A\otimes
\ou_B$ and $V_{AB}^+$ is a positive cone in $V_{AB}$ satisfying
\[
V_A^+\otimesmin V_B^+\subseteq  V_{AB}^+\subseteq  V_A^+\otimesmax V_B^+.
\]
Here $V_A^+\otimesmin V_B^+$ is the minimal cone
\[
V_A^+\otimesmin V_B^+=\{\sum_i v_{i,A}\otimes v_{i,B},\ v_{i,A}\in V^+_A, v_{i,B}\in V^+_B\}
\]
containing all separable states and $V_A^+\otimesmax V_B^+$ is the maximal cone
\[
V_A^+\otimesmax V_B^+=\{v_{AB}\in V_{AB}, \ \<f_A\otimes f_B,v_{AB}\>\ge 0,\ \forall
f_A\in E_A,\ f_B\in E_B\}
\]
for which all separately prepared measurements are valid. 
Similarly, the state space
$K_{AB}$ satisfies
\[
K_A\otimesmin K_B\subseteq K_{AB} \subseteq K_A\otimesmax K_B.
\]
As it was recently proved in \cite{aubrun2021entangleability}, the inclusion $K_A\otimesmin K_B\subseteq  K_A\otimesmax K_B$ is strict,
unless $K_A$ or $K_B$ is a simplex. The states in $K_A\otimesmax K_B$ are called
separable, all other states in $K_A\otimesmin K_B$ are entangled.

Some of the  basic examples are described below.

\begin{ex} \textbf{Classical theory.}  In the classical GPT, the systems have the form
$(\mathbb R^n, \mathbb R^n_+, (1,\dots,1))$, where $\mathbb R_+^n$ is the simplicial cone
generated by the positive half-axes. The state space is the simplex
$\Delta_n=\{(x_1,\dots,x_n),\ x_i\ge 0,\ \sum_i x_i=1\}$.  For any state space $K$ we
have
$\Delta_n\otimesmax K=\Delta_n\otimesmin K=K^{\oplus n}$, the convex direct sum of
$n$-copies of $K$. 
\end{ex}

\begin{ex}
\textbf{Quantum theory.} Here $V=M_n^{sa}$ is the space of $n$ by $n$ complex
hermitian matrices,  $V^+=M_n^+$ is the cone of positive semidefinite matrices and
$\ou=I$, the identity matrix. The state space is the set of density matrices
$D_n=\{\rho\in M_n^+,\ \Tr \rho=1\}$. The tensor product of the cones $M_m^+$ and $M_n^+$ is the
cone $M_{mn}^+$ of positive definite matrices in $M_{mn}^{sa}=M_m^{sa}\otimes M_n^{sa}$. 
\end{ex}

\begin{ex}\label{ex:GPTS}
Note that for any compact convex subset $S$ in the Euclidean space $\mathbb R^g$ we can
construct a triple $(V_S,V_S^+,\ou_S)$  with the state space isomorphic to $S$: put
$V_S=\mathbb R^{g+1}$ and
 $K=\{(1, x),\ x\in S\}$. We define $V_S^+$ as the cone generated by $K$ and $\ou_S=(1,0)$.
\end{ex}

\begin{ex}\label{ex:cs}
\textbf{Centrally symmetric state spaces.} In the previous example, assume that $S$ is
the unit ball of a norm $\|\cdot\|$ in $\mathbb R^g$. Then $K=\{(1,x),\ \|x\|\le 1\}$ and
$V_S^+=\{(t,x),\ \|x\|\le t\}$. Note that the dual space $(A_S,A_S^+)$ has the same form for the dual norm
$\|\cdot\|^*$ and the central element $(1,0)$ is an order unit in both $(V_S,V_S^+)$ and
$(A_S,A_S^+)$. In this
case we have for the base norm (in $V_S$) $\|(s,x)\|_{V_S}=\max\{|s|,\|x\|\}$ and the order unit
norm (in $A_S$) becomes $\|(t,\varphi)\|_{(1,0)}=|t|+\|\varphi\|^*$.

We now look at the tensor product with any triple $(V,V^+,\ou)$. We will use the obvious 
identifications $\mathbb R^{g+1}\otimes V\cong V^{g+1}\cong V\oplus V^g$, so the first
copy is distinguished. Let  $(y_0,y)\in V^{g+1}$, $y=(y_1,\dots,y_g)$, then it is easily
checked that $(y_0,y)\in V_S^+\otimesmin V^+$ if and only if
\begin{equation}\label{eq:cs_min}
y=\sum_j x_j\otimes z_j,\quad \|x_j\|=1,\ z_j\in V^+,\ \sum_j z_j\le y_0
\end{equation}
and $(y_0,y)\in V_S\otimesmax V^+$ if and only if
\begin{equation}\label{eq:cs_max}
\sum_i \varphi_i y_i\le y_0,\qquad \forall \varphi\in \mathbb R^g,\ \|\varphi\|^*=1.
\end{equation}
Note that using this condition for $\pm \varphi$, we obtain that  $y_0\in V^+$ and $y_0=0$
only if $y=0$.

An important centrally symmetric example is the qubit system $(M_2^{sa},M_2^+,I_2)$, where the state space is
isomorphic to the unit ball in $\ell^3_2$.  We will also frequently use the hypercubic
systems obtained from $\ell_\infty^g$, we will  denote the corresponding triple as
$(V_g,V_g^+,\ou_g)$.  The state space
\[
S_g:=\{(1,z_1,\dots,z_g),\ |z_i|\le 1\}
\]
is isomorphic to the hypercube $[-1,1]^g$.

\end{ex}

We now consider the norms obtained in $V_{AB}$ from the tensor product of the Banach
spaces  $V_A$ and  $V_B$ equipped with their respective base norms. A norm $\|\cdot\|$ on $V_{AB}$ is
called a (reasonable) cross norm if both $\|\cdot\|$ and its dual norm are multiplicative
on simple tensors. Equivalently,
\[
\|\cdot\|_{\epsilon(A,B)}\le \|\cdot\|\le \|\cdot\|_{\pi(A,B)},
\]
where $\epsilon(A,B)$ denotes the injective cross norm given by
\[
\|v_{AB}\|_{\epsilon(A,B)}=\sup\{ \<f_A\otimes f_B,v_{AB}\>,\ \ou_A\pm f_A\ge 0,\ou_B\pm
f_B\ge 0\}
\]
and $\pi(A,B)$ denotes the projective cross norm
\[
\|v_{AB}\|_{\pi(A,B)}=\inf\{ \sum_i \|v_{i,A}\|_{V_A}\|v_{i,B}\|_{V_B},\ v_{AB}=\sum_i
v_{i,A}\otimes v_{i,B}\}.
\]
It was proved in \cite{aubrun2018universal} that the base norm for the composite system
$(V_{AB},V^+_{AB},\ou_{AB})$ is a reasonable cross norm, and that it coincides with the
projective cross norm $\pi(A,B)$ for the separable cone $V_{AB}^+=V_A^+\otimesmin V_B^+$. 
For completeness,  we give a proof for the following equivalent formulation of the latter
statement.

\begin{thm}\label{thm:separable}  A bipartite state $\rho_{AB}\in K_A\otimesmax K_B$ is separable
if and only if $\|\rho_{AB}\|_{\pi(A,B)}\le 1$. 
\end{thm}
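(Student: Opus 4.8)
The plan is to prove both directions by relating separability to the projective cross norm $\pi(A,B)$, using the definition of $K_A \otimesmin K_B$ and the base norm.

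First I would prove the easy direction: if $\rho_{AB}$ is separable, then $\|\rho_{AB}\|_{\pi(A,B)} \le 1$. By definition, a separable state lies in $K_A \otimesmin K_B$, so it can be written as $\rho_{AB} = \sum_i \sigma_{i,A} \otimes \sigma_{i,B}$ with each factor in the respective positive cone. By rescaling I can arrange $\sigma_{i,A} \in V_A^+$, $\sigma_{i,B} \in V_B^+$ so that each simple tensor is a state times a scalar; applying $\ou_{AB} = \ou_A \otimes \ou_B$ and using $\langle \ou_{AB}, \rho_{AB}\rangle = 1$ gives $\sum_i \langle \ou_A, \sigma_{i,A}\rangle \langle \ou_B, \sigma_{i,B}\rangle = 1$. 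Since for an element $v \in V^+$ one has $\|v\|_V = \langle \ou, v\rangle$ (the base norm restricted to the cone is just the functional $\ou$), the projective-norm infimum is bounded by $\sum_i \|\sigma_{i,A}\|_{V_A}\|\sigma_{i,B}\|_{V_B} = \sum_i \langle \ou_A,\sigma_{i,A}\rangle\langle \ou_B,\sigma_{i,B}\rangle = 1$, hence $\|\rho_{AB}\|_{\pi(A,B)} \le 1$.

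The harder direction is the converse: assuming $\rho_{AB} \in K_A \otimesmax K_B$ with $\|\rho_{AB}\|_{\pi(A,B)} \le 1$, I want to conclude separability. I would take a decomposition $\rho_{AB} = \sum_i v_{i,A} \otimes v_{i,B}$ with $\sum_i \|v_{i,A}\|_{V_A}\|v_{i,B}\|_{V_B}$ close to $1$, and try to convert it into a decomposition into positive tensors. The obstacle is that the elements $v_{i,A}, v_{i,B}$ in a general projective decomposition need not lie in the positive cones. The main idea is to use the base-norm characterization $\|v\|_V = \inf\{\langle \ou, v_+ + v_-\rangle : v = v_+ - v_-, v_\pm \in V^+\}$ to split each $v_{i,A}$ and $v_{i,B}$ into positive and negative parts, expand the products, and regroup. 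This produces a representation of $\rho_{AB}$ as a difference of elements of $V_A^+ \otimesmin V_B^+$; the nontrivial step is arguing that, because $\rho_{AB}$ already lies in $V_A^+ \otimesmax V_B^+$ and has $\ou_{AB}$-value one together with total projective mass at most one, the negative contributions must vanish, forcing an honest separable decomposition.

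To make this rigorous I expect to pair against effects: evaluating $\langle \ou_A \otimes \ou_B, \cdot\rangle$ on the decomposition and comparing with the projective-norm bound controls the total positive mass, while the membership $\rho_{AB} \in K_A \otimesmax K_B$ guarantees $\langle f_A \otimes f_B, \rho_{AB}\rangle \ge 0$ for all effects, which pins down nonnegativity of the relevant components. The delicate point, and the step I anticipate as the main obstacle, is passing from an approximate decomposition (infimum not attained a priori) to an exact separable one; here I would invoke a compactness argument — the decompositions can be taken with a bounded number of terms by a Carathéodory-type bound in the finite-dimensional space $V_{AB}$, so a limit of near-optimal decompositions yields a genuine separable representation with $\langle \ou_{AB}, \rho_{AB}\rangle = 1$, completing the proof.
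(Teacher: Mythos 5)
Your proposal is correct and follows essentially the same route as the paper: bound the projective norm by $1$ for separable states via $\|v\|_{V}=\<\ou,v\>$ on the cone, and for the converse split each tensor factor into positive and negative parts realizing the base norm, regroup into $\rho_{AB}=\rho_{AB}^+-\rho_{AB}^-$ with both terms in $V_A^+\otimesmin V_B^+$, and evaluate $\ou_{AB}$ to force $\rho_{AB}^-=0$. The only differences are cosmetic: in finite dimensions the projective-norm infimum is attained, so the paper simply takes an exact decomposition with mass $\le 1$ and needs neither your Carath\'eodory/compactness step nor the pairing against general effects $f_A\otimes f_B$ (only $\ou_{AB}$ is used).
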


\begin{proof} Assume $\rho_{AB}$ is separable, then $\rho_{AB}=\sum_i\lambda_i
\rho_{i,A}\otimes \rho_{i,B}$ for $\rho_{i,A}\in K_A$, $\rho_{i,B}\in K_B$ and probabilities
$\lambda_i$. By definition of the projective cross norm,
\[
\|\rho_{AB}\|_{\pi(A,B)}\le \sum_i\lambda_i\|\rho_{i,A}\|_{V_A}\|\rho_{i,B}\|_{V_B}=1.
\]
Assume the converse and let $y_{i,A}\in V_A$ and $y_{i,B}\in V_B$ be such that
$\rho_{AB}=\sum_i y_{i,A}\otimes y_{i,B}$ and $\sum_i \|y_{i,A}\|_{V_A}\|y_{i,B}\|_{V_B}\le
1$. Then $y_{i,A}=y_{i,A}^+-y_{i,A}^-$ with $y_{i,A}^\pm\in V_A^+$ and
$\|y_{i,A}\|_{V_A}=\|y_{i,A}^+\|_{V_A}+\|y_{i,A}^-\|_{V_A}$, similarly for $y_{i,B}$.
It follows that $\rho_{AB}=\rho_{AB}^+-\rho_{AB}^-$, with
\[
\rho_{AB}^+=\sum_i(y_{i,A}^+\otimes y_{i,B}^++y_{i,A}^-\otimes y_{i,B}^-)\in V_A^+\otimesmin
V_B^+
\]
and 
\[
\rho_{AB}^-=\sum_i(y_{i,A}^+\otimes y_{i,B}^-+y_{i,A}^-\otimes y_{i,A}^+)\in V_A^+\otimesmin
V_B^+
\]
With $\ou_{AB}=\ou_A\otimes \ou_B$, we obtain
\[
1=\<\ou_{AB},\rho_{AB}\>=\<\ou_{AB},\rho_{AB}^+-\rho_{AB}^-\>\le
\<\ou_{AB},\rho_{AB}^++\rho_{AB}^-\>=\sum_i \|y_{i,A}\|_{V_A}\|y_{i,B}\|_{V_B}\le 1,
\]
whence $\rho_{AB}^-=0$ and $\rho_{AB}=\rho_{AB}^+$ is separable.

\end{proof}

%
%

\section{Steering and tensor norms}\label{sec:tensor}

\subsection{Conditional states and assemblages}

Let $\sigma_{AB}\in K_{AB}$ be a state of the composite system $(V_{AB}, V_{AB}^+,\ou_{AB})$
and let $\{f_x\}_{x=1}^g$ be a (finite) collection of measurements on the system
$(V_A,V_A^+,\ou_A)$, with effects $f_{a|x}$ and outcomes  $a\in \Omega_x$. Viewing any $f\in E$ as an affine
function over $K$, we may define the conditional states
\[
\rho_{a|x}:= (f_{a|x}\otimes \mathrm{id}_B)(\sigma_{AB})\in V_B^+.
\]
For all $x=1,\dots, g$, we have $\sum_{a\in \Omega_x} \rho_{a|x}=\sigma_B=(\ou_A\otimes
\mathrm{id}_B)(\sigma_{AB})$. More generally, any collection
\[
\{\rho_{a|x}\in V_B^+,\ \sum_a \rho_{a|x}=\sigma_B,\ \forall x=1,\dots,g\}
\]
is called an assemblage with barycenter $\sigma_B$. The tuple $\kk=(k_1,\dots,k_g)$ with 
$k_x=|\Omega_x|$, $x=1,\dots,g$ determines the shape of the assemblage. We will say
that the assemblage is dichotomic if $k_x=2$ for all $x$, in this case we will use
$\Omega_x=\{+,-\}$
as the set of labels.

The next result  shows that 
if we do not restrict the choice of the system $V_A$ and allow maximal tensor  products, all
assemblages can be obtained as conditional states for  a bipartite state   and some
collection of measurements on $V_A$ of a corresponding shape. 

\begin{thm} \label{thm:assemblages}
Let $\kk=(k_1,\dots,k_g)$ and let $S$ be the Cartesian product $S=S_{\kk}:=\Pi_{x=1}^g
\Delta_{k_x}$.  For any  system $(V,V^+,\ou)$ with state space $K$, the set of all assemblages of shape
$\kk$ can be identified with the tensor product $S\otimesmax K$. In particular, 
there is a system $(V_S,V_S^+,\ou_S)$ with state space $S$ and with a canonical set of
measurements $\{p_x\}$ (identified with projections $p_x:S\to \Delta_{k_x}$ onto the $x$-th
component),  such that for any assemblage $\{\rho_{a|x}\}$ in $V^+$ of shape $\kk$,  
there is a (unique) state $\xi\in S\otimesmax K$ such
that 
\[
\rho_{a|x}= (p_{a|x}\otimes \mathrm{id}_K)(\xi),\qquad \forall a,x.
\]
\end{thm}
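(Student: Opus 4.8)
The plan is to realize the claimed system $(V_S,V_S^+,\ou_S)$ via Example~\ref{ex:GPTS}, so that $A_S=V_S^*$ is the space of affine functions on $S=\Pi_{x=1}^g\Delta_{k_x}$, the order unit $\ou_S$ is the constant function $1$, and the canonical effects $p_{a|x}\in E_S$ are the coordinate functions $p_{a|x}(s)=\mu^{(x)}_a$ for $s=(\mu^{(1)},\dots,\mu^{(g)})$, $\mu^{(x)}\in\Delta_{k_x}$. Under the canonical identification $V_S\otimes V\cong\mathrm{Hom}(A_S,V)$ sending $\xi$ to $T_\xi(\phi):=(\phi\otimes\mathrm{id}_K)(\xi)$, the desired relation reads $\rho_{a|x}=T_\xi(p_{a|x})$, so the theorem amounts to showing that $\xi\mapsto T_\xi|_{\{p_{a|x}\}}$ is a bijection from $S\otimesmax K$ onto the assemblages of shape $\kk$.

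First I would dispatch the easy direction: each $\xi\in S\otimesmax K$ yields an assemblage. For an effect $f\in E$ we have $\langle f,T_\xi(p_{a|x})\rangle=\langle p_{a|x}\otimes f,\xi\rangle\ge0$ by definition of the maximal tensor cone, so $\rho_{a|x}:=T_\xi(p_{a|x})\in V^+$; since $\sum_a p_{a|x}=\ou_S$ for every $x$, one gets $\sum_a\rho_{a|x}=T_\xi(\ou_S)=(\ou_S\otimes\mathrm{id}_K)(\xi)=\sigma_B$ independent of $x$, with $\langle\ou,\sigma_B\rangle=\langle\ou_S\otimes\ou,\xi\rangle=1$.

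For uniqueness and the well-definedness of the inverse, the elementary fact I would use is that affine functions on $S=\Pi_x\Delta_{k_x}$ are additively separable: every $h\in A_S$ is determined by its vertex values $h(\mathbf a)=c_0+\sum_x\psi_x(a_x)$ for $\mathbf a=(a_1,\dots,a_g)$, and is a linear combination of the $p_{a|x}$. Hence $\{p_{a|x}\}$ spans $A_S$, and the only linear relations among them are generated by $\sum_a p_{a|x}-\sum_a p_{a|x'}=0$. Consequently, for a given assemblage $\{\rho_{a|x}\}$ the assignment $p_{a|x}\mapsto\rho_{a|x}$ is consistent exactly because $\sum_a\rho_{a|x}=\sigma_B$ does not depend on $x$; this defines a unique $T\in\mathrm{Hom}(A_S,V)$, i.e. a unique $\xi\in V_S\otimes V$. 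Spanning also shows $\xi\mapsto\{\rho_{a|x}\}$ is injective, giving uniqueness of $\xi$.

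The main work is to prove that this $\xi$ lies in $S\otimesmax K$. Normalization is $T(\ou_S)=\sigma_B\in K$ as above, so it remains to show $\xi\in V_S^+\otimesmax V^+$, equivalently $T(f_S)\in V^+$ for every $f_S\in A_S^+$. Here I would establish the key structural lemma that the effect cone of a product of simplices is generated by the canonical effects, $A_S^+=\mathrm{cone}\{p_{a|x}\}$: writing $h(\mathbf a)=c_0+\sum_x\psi_x(a_x)$, positivity of $h$ means $c_0+\sum_x\min_a\psi_x(a)\ge0$, and setting $s^x_a=\psi_x(a)+\alpha_x$ with $\sum_x\alpha_x=c_0$ one can choose $\alpha_x\ge-\min_a\psi_x(a)$ precisely under this inequality, whence $h=\sum_{x,a}s^x_a\,p_{a|x}$ with all $s^x_a\ge0$. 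Applying $T$ then gives $T(h)=\sum_{x,a}s^x_a\,\rho_{a|x}\in V^+$ since each $\rho_{a|x}\in V^+$, finishing the proof. I expect this lemma—that nonnegativity of an affine function on $\Pi_x\Delta_{k_x}$ is equivalent to a nonnegative decomposition into the coordinate effects—to be the only genuinely nontrivial point; everything else is the duality $V_S\otimes V\cong\mathrm{Hom}(A_S,V)$ together with bookkeeping of the relations $\sum_a p_{a|x}=\ou_S$.
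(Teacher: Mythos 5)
Your argument is correct, and it takes a genuinely different route from the paper. The paper's proof treats only the dichotomic case $\kk=(2,\dots,2)$: it identifies $\Delta_2^g\simeq[-1,1]^g$, invokes the explicit description \eqref{eq:cs_max} of $V_S^+\otimesmax V^+$ for a centrally symmetric (hypercubic) first factor to identify $S_g\otimesmax K$ with tuples $(\sigma,y_1,\dots,y_g)$, $\sigma\pm y_x\in V^+$, and defers the general shape $\kk$ to \cite{jencova2018incompatible}; its main purpose is to set up the $(\sigma,y)$ parametrization used throughout Section \ref{sec:tensor}. You instead prove the general case directly and self-containedly, via the identification $V_S\otimes V\cong\mathrm{Hom}(A_S,V)$ and the structural lemma that for a product of simplices the cone of nonnegative affine functions is generated by the coordinate effects, $A_S^+=\mathrm{cone}\{p_{a|x}\}$ (equivalently, the facets of $\Pi_x\Delta_{k_x}$ are exactly $\{p_{a|x}=0\}$). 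Your verification of that lemma is sound: additive separability of affine functions on a product, $h=c_0+\sum_x\psi_x$, reduces positivity to $c_0+\sum_x\min_a\psi_x(a)\ge 0$, which is precisely the condition allowing the shifts $\alpha_x$ to be chosen nonnegative-compatibly; and the relation count ($g-1$ relations $\sum_a p_{a|x}=\ou_S$ against $\dim A_S=1-g+\sum_x k_x$) justifies both well-definedness and uniqueness of $\xi$. What your approach buys is generality and independence from the external reference; what the paper's approach buys is the concrete hypercubic coordinates $(\sigma,y_1,\dots,y_g)$ and the effects $p_{\pm|x}=\tfrac12(1,\pm e_x)$ that are reused in Theorems \ref{thm:lhs} and \ref{thm:lhs_norm}, which your more abstract formulation does not supply directly. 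The two are entirely compatible: your lemma specializes, for $k_x=2$, to the statement that the extreme points of the dual ball of $\ell_\infty^g$ are $\{\pm1\}^g$, which is exactly what underlies \eqref{eq:cs_max}.
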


\begin{proof} We will give a proof for the case of dichotomic assemblages, mostly to
introduce some notations needed later. For the proof in general see
\cite{jencova2018incompatible}. 

So let $\kk=(2,\dots,2)$. By the isomorphism $(\lambda,1-\lambda)\mapsto 2\lambda-1$,
we will identify   $S=\Delta_2^g\simeq [-1,1]^g$. Let $(V_g,V_g^+,\ou_g)$ be the system with centrally
symmetric state space $S_g\simeq [-1,1]^g$, see Example \ref{ex:cs}.  It is easily seen from
\eqref{eq:cs_max} that the maximal tensor product 
$S_g\otimesmax K$ can be identified with the subset in $V^{g+1}$ of elements of the form $(\sigma, y_1,\dots,y_g)$ with
$\sigma\in K$ and $\sigma\pm y_x\in V^+$. Clearly,  $\{\frac12(\sigma\pm y_x)\}$ is a
dichotomic assemblage with barycenter $\sigma$. Conversely, for any assemblage $\{\rho_{\pm|x}\}$
with barycenter $\sigma$, $(\sigma, y_1,\dots, y_g)$ with
$y_x:=\rho_{+,x}-\rho_{-|x}$ is an element of $S_g\otimesmax K$ and it is clear that this
established a one-to-one correspondence. The effects $p_{\pm|x}$ have the form
$\frac12(1, \pm e_x)$, where $\{e_x\}_{x=1}^g$ is the standard basis in $\mathbb R^g$.

\end{proof}

It is well known that any assemblage in quantum theory can be obtained with $V_A\simeq
V_B$. In the general case, this depends on the properties of the system $V_B$ as well as
the choice of the tensor product. For more information see \cite{barnum2013ensemble}.

\subsection{Local hidden state models and steering}\label{sec:lhs}

As in the quantum case, we say that an assemblage of conditional states
$\{\rho_{a|x}=(f_{a|x}\otimes id_B)(\sigma_{AB})\}$ admits a local hidden state (LHS) model if there is some (finite) set
$\Lambda$, a probability measure
$q\in \mathcal P(\Lambda)$, conditional probabilities $q(x|a,\lambda)$ and elements 
 $\rho_\lambda\in K$ such that 
\begin{equation}\label{eq:unsteerable}
\rho_{a|x}=\sum_{\lambda\in \Lambda} q(\lambda)q(a|x,\lambda)\rho_\lambda,\qquad a\in
\Omega_x,\ x=1,\dots, g.
\end{equation}
We say that a bipartite state $\sigma_{AB}$ is ($A\to B$) steerable 
if there is a set $\{f_x\}$ of measurements on the system $A$ such that the assemblage 
 $\{\rho_{a|x}\}$ does not admit a LHS model.  If no such collection of measurements exists, the
 state is unsteerable. We may also restrict the set of measurements, so we say that, for
 example, a state is unsteerable by dichotomic measurements if all corresponding
 dichotomic assemblages admit a LHS.

For a general assemblage $\{\rho_{a|x}\}$ satisfying \eqref{eq:unsteerable} we will also say
that the assemblage is classical.  We will show later  that we may equivalently formulate
the condition  in \eqref{eq:unsteerable} with the  
set $\Lambda$ replaced by the  (possibly infinite) set $\partial_e K$ of extremal points of $K$
(pure states), common for all LHS models.

\begin{thm}\label{thm:lhs} Let $\{\rho_{a|x}\}$ be an assemblage of shape $\kk$. Then
the assemblage is classical  if and only if the corresponding element of
$S_{\kk}\otimesmax K$ is separable.

\end{thm}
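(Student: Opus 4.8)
The plan is to read off both implications directly from the canonical bijection of Theorem~\ref{thm:assemblages} between assemblages of shape $\kk$ and states $\xi\in S_{\kk}\otimesmax K$, matching the data of a LHS model with a decomposition of $\xi$ into the minimal tensor cone $S_{\kk}\otimesmin K$ (recall that separability means exactly membership in this cone). The conceptual point that makes everything work is that a point $s\in S_{\kk}=\prod_x\Delta_{k_x}$ is precisely a tuple whose $x$-th component $s^x\in\Delta_{k_x}$ is a probability distribution over the outcome set $\Omega_x$, and its coordinates $s^x(a)=\langle p_{a|x},s\rangle$ are exactly conditional probabilities of the form $q(a|x,\cdot)$. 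Thus the hidden-variable data $\{q(\lambda),q(a|x,\lambda),\rho_\lambda\}$ and a minimal-tensor decomposition $\xi=\sum_\lambda q(\lambda)\,s_\lambda\otimes\rho_\lambda$ are two encodings of the same object, and the proof is essentially the translation between them.

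For the direction that separability implies classicality, I would start from a decomposition $\xi=\sum_i\mu_i\,s_i\otimes\rho_i$ with $s_i\in S_{\kk}$, $\rho_i\in K$ and weights $\mu_i$, apply the canonical measurement, and compute
\[
\rho_{a|x}=(p_{a|x}\otimes\mathrm{id}_K)(\xi)=\sum_i\mu_i\,\langle p_{a|x},s_i\rangle\,\rho_i=\sum_i\mu_i\, s_i^x(a)\,\rho_i,
\]
which is visibly of the form \eqref{eq:unsteerable} with $q(i)=\mu_i$, $q(a|x,i)=s_i^x(a)$ and hidden states $\rho_i$ (the barycenter condition is automatic since $\sum_a s_i^x(a)=1$). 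For the converse, given a LHS model I would reverse the construction: for each $\lambda$ let $s_\lambda\in S_{\kk}$ be the point whose $x$-th component is the distribution $(q(a|x,\lambda))_{a\in\Omega_x}\in\Delta_{k_x}$, and set $\xi':=\sum_\lambda q(\lambda)\,s_\lambda\otimes\rho_\lambda\in S_{\kk}\otimesmin K$; a one-line computation gives $(p_{a|x}\otimes\mathrm{id}_K)(\xi')=\rho_{a|x}$ for all $a,x$. In the dichotomic case this is nothing but the concrete description from the proof of Theorem~\ref{thm:assemblages}: writing $c_x(\lambda)=q(+|x,\lambda)-q(-|x,\lambda)\in[-1,1]$ yields $s_\lambda=(1,c_1(\lambda),\dots,c_g(\lambda))\in S_g$ and recovers $y_x=\rho_{+|x}-\rho_{-|x}=\sum_\lambda q(\lambda)c_x(\lambda)\rho_\lambda$ together with the barycenter $\sigma$.

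The one step that needs genuine care is closing the converse: I have produced a \emph{separable} state $\xi'$ inducing the given assemblage, but I must identify it with the state $\xi$ associated to that assemblage in the statement. This is exactly where the uniqueness asserted in Theorem~\ref{thm:assemblages} is used: since $\xi'\in S_{\kk}\otimesmin K\subseteq S_{\kk}\otimesmax K$ reproduces $\{\rho_{a|x}\}$ under the canonical measurements, uniqueness forces $\xi'=\xi$, so $\xi$ lies in the minimal cone and is separable. No estimates or norm computations are required; the entire content is the dictionary between probabilistic mixtures of hidden states and minimal-tensor decompositions, together with uniqueness of the representing state.
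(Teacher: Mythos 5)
Your proof is correct, and its core is the same dictionary the paper uses: hidden-variable data $\{q(\lambda),q(a|x,\lambda),\rho_\lambda\}$ on one side, a decomposition of $\xi$ into the cone $S_{\kk}\otimesmin K$ on the other. The differences are worth noting, though. The paper only writes out the dichotomic case, passing to the centered coordinates $h_\lambda(x)=q(+|x,\lambda)-q(-|x,\lambda)\in[-1,1]$ and matching the LHS condition against the explicit description \eqref{eq:cs_min} of $S_g\otimesmin K$ for the hypercube; the general shape $\kk$ is delegated to a reference. You instead argue directly in the simplex-product coordinates of $S_{\kk}=\prod_x\Delta_{k_x}$, where a point \emph{is} a tuple of conditional distributions and $\langle p_{a|x},s\rangle=s^x(a)$, so your argument covers arbitrary $\kk$ in one pass and specializes to the paper's computation in the dichotomic case. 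A second, smaller difference is logical organization: rather than showing that minimal-cone membership of the \emph{given} element is equivalent to the LHS condition, you construct a separable $\xi'$ reproducing the assemblage and then invoke the uniqueness clause of Theorem~\ref{thm:assemblages} to conclude $\xi'=\xi$. That step is legitimate (the effects $p_{a|x}\otimes h$ separate points of $S_{\kk}\otimesmax K$, which is what the uniqueness assertion encodes), and you were right to flag it as the one place where something beyond bookkeeping is needed. Net effect: your route is marginally less explicit about the geometry of the hypercubic minimal cone, but buys the general case without extra work.
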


\begin{proof} We again prove the statement for dichotomic assemblages that we focus on in
this section. For a general proof, see  \cite{jencova2018incompatible}.

We have seen in the  proof of Theorem \ref{thm:assemblages} that the element in $S_g\otimesmax K$ corresponding
 to the assemblage $\{\rho_{\pm|x}\}$ is $(\sigma,y_1,\dots,y_g)$, with
$y_x=\rho_{+|x}-\rho_{-|x}$ and $\sigma$ the barycenter. Observe that the LHS model
\eqref{eq:unsteerable} for $\{\rho_{\pm|x}\}$ is equivalent to
\begin{equation}\label{eq:cs_min1}
y_x= \sum_\lambda h_\lambda(x)\phi_\lambda,\qquad x=1,\dots,g
\end{equation}
for some $h_\lambda(x)\in [-1,1]$ and $\phi_\lambda\in V^+$,
$\sum_\lambda\phi_\lambda=\sigma$. Indeed, if \eqref{eq:unsteerable} holds then we may put
$h_\lambda(x)=q(+|x,\lambda)-q(-|a,\lambda)$ and $\phi_\lambda=q(\lambda)\rho_\lambda$.
Conversely, we obtain a LHS model from \eqref{eq:cs_min1} by setting
$q(\pm|x,\lambda)=\frac12(1\pm h_\lambda(x))$ and normalizing the elements $\phi_\lambda$.
This amounts to
\[
(\sigma,y)=\sum_\lambda h_\lambda \otimes \phi_\lambda
\]
which can be seen to be equivalent to the characterization of $S_g\otimesmin K$ in
\eqref{eq:cs_min}, see  Example
\ref{ex:cs}.

\end{proof}

Our first characterization of steering by tensor cross norms  follows immediately from Theorems \ref{thm:assemblages}, \ref{thm:lhs}
and \ref{thm:separable}. 

\begin{cor}\label{cor:lhs} Let $\{\rho_{a|x}\}$ be an assemblage and let
$K_A=S$.  Let $\xi_{AB}\in K_A\otimes K_B$ be such that $(p_{a|x}\otimes
id)(\xi_{AB})=\rho_{a|x}$ as in  Theorem \ref{thm:assemblages}. 
Then the assemblage is classical  if and only if 
$\|\xi_{AB}\|_{\pi(A,B)}\le 1$.

\end{cor}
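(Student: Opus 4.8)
The plan is to assemble Corollary~\ref{cor:lhs} purely by chaining together the three preceding results, since the corollary is stated as an immediate consequence of them. First I would invoke Theorem~\ref{thm:assemblages} to fix the correspondence: given the assemblage $\{\rho_{a|x}\}$ and the identification $K_A = S = S_{\kk}$, there is a unique state $\xi_{AB} \in K_A \otimesmax K_B$ with $(p_{a|x}\otimes \mathrm{id})(\xi_{AB}) = \rho_{a|x}$ for all $a,x$. This is exactly the hypothesis of the corollary, so no work is needed here beyond recalling that the element $\xi_{AB}$ genuinely lives in $K_A \otimesmax K_B$, which is what Theorem~\ref{thm:separable} later requires as its ambient setting.

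Next I would apply Theorem~\ref{thm:lhs}, which states that the assemblage is classical if and only if the corresponding element of $S_{\kk}\otimesmax K$ is separable. Under the identification of Theorem~\ref{thm:assemblages}, that element is precisely $\xi_{AB}$, so classicality of the assemblage is equivalent to separability of $\xi_{AB}$ in $K_A \otimesmax K_B$. Finally I would feed this into Theorem~\ref{thm:separable}: a bipartite state $\rho_{AB} \in K_A \otimesmax K_B$ is separable if and only if $\|\rho_{AB}\|_{\pi(A,B)} \le 1$. Taking $\rho_{AB} = \xi_{AB}$, separability of $\xi_{AB}$ is equivalent to $\|\xi_{AB}\|_{\pi(A,B)} \le 1$. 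Composing the two equivalences yields that the assemblage is classical if and only if $\|\xi_{AB}\|_{\pi(A,B)}\le 1$, which is the claim.

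The only genuine point requiring a moment's care is making sure the three results compose cleanly, i.e.\ that the ``element of $S_{\kk}\otimesmax K$'' appearing in Theorem~\ref{thm:lhs} is literally the same $\xi_{AB} \in K_A \otimesmax K_B$ delivered by Theorem~\ref{thm:assemblages}; this is guaranteed by the uniqueness clause in Theorem~\ref{thm:assemblages} and the fact that $K_A = S_{\kk}$ by hypothesis, so the spaces match. I expect no real obstacle: the proof is a one-line transitivity argument through two biconditionals, and the author likely states it as following ``immediately'' precisely because each link has already been established. The most one would write is a short sentence noting that combining Theorems~\ref{thm:assemblages}, \ref{thm:lhs}, and \ref{thm:separable} gives the chain of equivalences ``classical $\iff$ $\xi_{AB}$ separable $\iff$ $\|\xi_{AB}\|_{\pi(A,B)}\le 1$.''
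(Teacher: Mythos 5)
Your proposal is correct and matches the paper exactly: the corollary is stated there as following immediately from Theorems \ref{thm:assemblages}, \ref{thm:lhs} and \ref{thm:separable}, which is precisely the chain of two biconditionals (classical $\iff$ $\xi_{AB}$ separable $\iff$ $\|\xi_{AB}\|_{\pi(A,B)}\le 1$) that you write out. Nothing further is needed.
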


\begin{remark} Note that if $K_A=S$ as in Corollary \ref{cor:lhs}, any $\sigma_{AB}\in
K_A\otimes K_B$ is
unsteerable if and only if  $\{(p_{a|x}\otimes id_B)(\sigma_{AB})\}$ admits
a LHS model if and only if
$\sigma_{AB}$ is separable. It is well known and immediately seen that a separable state is
always unsteerable: indeed,  we can view any set of measurements $\{f_x\}$ as a map $f_A:
K_A\to S$ such that $f_A\otimes \mathrm{id}_B$ maps $\sigma_{AB}$ to the element in
$S\otimesmax K_B$ corresponding to the assemblage of conditional states. Since the map
$f_A$ preserves the state spaces, it is a contraction with respect to the base norms and we
have from the properties of the projective cross norm that 
\[
\|(f_A\otimes \mathrm{id}_B)(\sigma_{AB})\|_\pi\le \|\sigma_{AB}\|_\pi\le 1.
\]
here $\pi$ always denotes the projective cross norm for the respective base norms. On the
other hand, there are unsteerable entangled quantum states.

\end{remark}

\subsection{Dichotomic assemblages and tensor cross norms}

Let $\sigma\in K$ be an interior element, so that 
$\sigma\in \mathrm{int}(V^+)$. We will show that classical dichotomic assemblages with barycenter $\sigma$ 
  can be characterized by  tensor cross norms in $\ell_\infty^g\otimes V$, if we choose an appropriate
norm in $V$, depending on $\sigma$. We will denote the set of all such assemblages with
$g$ elements  by
$\mathcal A_{2,\sigma}^g$. 

Note first that the barycenter  $\sigma$ is an order unit in
$(V,V^+)$, so that  we can (formally) define a system as $(A,A^+,\sigma)$, where
$\sigma$ is the unit effect and the state space becomes 
\[
K^\sigma=\{h\in A^+,\ \<h,\sigma\>=1\}.
\]
Let us denote the corresponding base norm (in $A$) by $\|\cdot\|^\sigma$ and the dual
order unit norm (in $V$) by $\|\cdot\|_\sigma$. 

We have seen in the proof of Theorem \ref{thm:assemblages} that assemblages in $\mathcal A_{2,\sigma}^g$  can be identified with
elements  $y=(y_1,\dots,y_g)\in V^g\simeq \mathbb R^g\otimes V$ such that  $\pm y_x\le \sigma$, equivalently,
$\|y_x\|_\sigma\le 1$. The following is an easy observation  from  the definition of the
injective norm.

\begin{prop}\label{prop:injective} 
Let  $y\in V^g$. Then $\{\frac12(\sigma\pm y_x)\}$ is an assemblage if and only if 
$\|y\|_{\epsilon,\sigma}\le 1$, where $\|\cdot\|_{\epsilon,\sigma}$ is the injective cross  norm
in the tensor product $\ell_\infty^g\otimes (V,\|\cdot\|_\sigma)$.

\end{prop}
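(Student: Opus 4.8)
The plan is to reduce both sides of the claimed equivalence to the single condition $\max_x \|y_x\|_\sigma \le 1$, and then match them.

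First I would unpack what it means for $\{\tfrac12(\sigma\pm y_x)\}$ to be a dichotomic assemblage with barycenter $\sigma$. By definition this requires each listed element to lie in $V^+$ and, for every $x$, the two elements labeled $\pm$ to sum to $\sigma$. The summation condition is automatic, since $\tfrac12(\sigma+y_x)+\tfrac12(\sigma-y_x)=\sigma$, so the only content is positivity: $\sigma\pm y_x\in V^+$ for all $x$. Because $\sigma$ is an order unit in $(V,V^+)$, this is by definition of the order unit norm exactly $\|y_x\|_\sigma\le 1$ for every $x$, that is, $\max_x\|y_x\|_\sigma\le 1$. This is the same identification already used in the proof of Theorem \ref{thm:assemblages}.

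The second and main step is to evaluate the injective cross norm. Writing $y=\sum_x e_x\otimes y_x$ under the identification $V^g\simeq \ell_\infty^g\otimes V$, I would apply the definition of the injective norm as a supremum over products of functionals in the dual unit balls. The dual of $\ell_\infty^g$ is $\ell_1^g$, and the dual of $(V,\|\cdot\|_\sigma)$ is $(A,\|\cdot\|^\sigma)$ since the order unit norm $\|\cdot\|_\sigma$ and the base norm $\|\cdot\|^\sigma$ form a dual pair under the duality $\langle\cdot,\cdot\rangle$. Thus $\|y\|_{\epsilon,\sigma}=\sup\{|\sum_x \phi_x\langle h,y_x\rangle| : \|\phi\|_{\ell_1^g}\le 1,\ \|h\|^\sigma\le 1\}$. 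Taking the supremum over $\phi$ first, the $\ell_1$--$\ell_\infty$ duality turns $\sup_{\|\phi\|_1\le1}|\sum_x\phi_x\langle h,y_x\rangle|$ into $\max_x|\langle h,y_x\rangle|$, and the remaining supremum over $h$ with $\|h\|^\sigma\le1$ recovers $\|y_x\|_\sigma$ by the duality of the two norms. Hence $\|y\|_{\epsilon,\sigma}=\max_x\|y_x\|_\sigma$.

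Combining the two steps, both the assemblage condition and the bound $\|y\|_{\epsilon,\sigma}\le 1$ are equivalent to $\max_x\|y_x\|_\sigma\le 1$, which proves the proposition. The only point requiring care is the identification of dual norms, namely that $\|\cdot\|_\sigma$ and $\|\cdot\|^\sigma$ are dual to one another under $\langle\cdot,\cdot\rangle$; granting this, the injective norm factors cleanly precisely because one tensor factor is $\ell_\infty^g$, whose unit ball is the cube, so that the supremum over $\phi$ collapses to a coordinatewise maximum. I do not expect a genuine obstacle here: the statement is essentially a direct computation once these dualities are in place, which is why the paper labels it an easy observation.
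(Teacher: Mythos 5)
Your proof is correct and follows essentially the same route as the paper, which simply notes that $\|y\|_{\epsilon,\sigma}=\max_x\|y_x\|_\sigma$ and that the assemblage condition is $\sigma\pm y_x\in V^+$, i.e.\ $\|y_x\|_\sigma\le 1$. You merely spell out the $\ell_1$--$\ell_\infty$ and order-unit/base-norm dualities behind that identity, which the paper leaves implicit.
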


\begin{proof} This follows from
\[
\|y\|_{\epsilon,\sigma}=\max_x \|y_x\|_\sigma.
\]

\end{proof}

%

We now  define a new norm in $\ell_\infty^g\otimes V$: For $y=(y_1,\dots,y_g)\in V^g\simeq
\mathbb R^g\otimes V$,  put 
\[
\|y\|_{steer, \sigma}:=\inf\{ \|\sum_j \phi_j\|_\sigma,\ y=\sum_j z_j\otimes \phi_j,\
		\|z_j\|_\infty= 1,\ \phi_j\in V^+\}.
\]

\begin{prop}\label{prop:crossnorm} $\|\cdot\|_{steer,\sigma}$ is a reasonable  cross norm on $\ell_\infty^g\otimes
(V,\|\cdot\|_\sigma)$. 

\end{prop}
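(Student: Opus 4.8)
The plan is to reduce everything to the characterization of reasonable cross norms recalled above: a norm on $\ell_\infty^g\otimes V$ is a reasonable cross norm precisely when it lies between the injective and projective cross norms. Writing $\|\cdot\|_{\pi,\sigma}$ for the projective cross norm on $\ell_\infty^g\otimes(V,\|\cdot\|_\sigma)$ and recalling from Proposition \ref{prop:injective} that the injective cross norm is $\|y\|_{\epsilon,\sigma}=\max_x\|y_x\|_\sigma$, it suffices to check that $\|\cdot\|_{steer,\sigma}$ is a norm and that
\[
\|y\|_{\epsilon,\sigma}\le \|y\|_{steer,\sigma}\le \|y\|_{\pi,\sigma}
\]
for all $y\in V^g$. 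The norm axioms are routine: positive homogeneity follows by scaling the $\phi_j$ and absorbing signs into the $z_j$ (which leaves $\|z_j\|_\infty=1$ intact), and subadditivity follows by concatenating decompositions of $y$ and $y'$ and applying the triangle inequality for $\|\cdot\|_\sigma$ to the resulting sum of positive elements. Finiteness is clear since $V=V^+-V^+$ provides at least one admissible decomposition, and definiteness will follow from the lower bound.

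For the lower bound I would fix an admissible decomposition $y=\sum_j z_j\otimes\phi_j$ with $\|z_j\|_\infty=1$ and $\phi_j\in V^+$, and set $\Phi=\sum_j\phi_j\in V^+$. Since $|z_j(x)|\le 1$ and $\phi_j\ge 0$, for each $x$ one has $-\Phi\le y_x=\sum_j z_j(x)\phi_j\le\Phi$; as $\Phi\le\|\Phi\|_\sigma\,\sigma$, this gives $\pm y_x\le\|\Phi\|_\sigma\,\sigma$ and hence $\|y_x\|_\sigma\le\|\Phi\|_\sigma=\|\sum_j\phi_j\|_\sigma$. Taking the maximum over $x$ and the infimum over decompositions yields $\|y\|_{\epsilon,\sigma}\le\|y\|_{steer,\sigma}$; in particular $\|y\|_{steer,\sigma}=0$ forces each $y_x=0$, so $\|\cdot\|_{steer,\sigma}$ is indeed a norm.

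The crux is the upper bound, where a general projective decomposition must be turned into one with \emph{positive} $\phi_j$ at no extra cost. Given $y=\sum_j z_j\otimes\phi_j$ with $\|z_j\|_\infty=1$, set $t_j=\|\phi_j\|_\sigma$, so that $\pm\phi_j\le t_j\sigma$ and both $\phi_j+t_j\sigma$ and $t_j\sigma-\phi_j$ lie in $V^+$. I would then symmetrize each simple tensor as
\[
z_j\otimes\phi_j=z_j\otimes\tfrac12(\phi_j+t_j\sigma)+(-z_j)\otimes\tfrac12(t_j\sigma-\phi_j),
\]
which is an admissible steering decomposition with unit vectors $\pm z_j$ and positive components. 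The total positive part telescopes to $\sum_j\tfrac12[(\phi_j+t_j\sigma)+(t_j\sigma-\phi_j)]=(\sum_j t_j)\sigma$, whose $\|\cdot\|_\sigma$-norm equals $\sum_j t_j=\sum_j\|\phi_j\|_\sigma$ because $\|\sigma\|_\sigma=1$. Hence $\|y\|_{steer,\sigma}\le\sum_j\|\phi_j\|_\sigma$, and taking the infimum over all projective decompositions gives $\|y\|_{steer,\sigma}\le\|y\|_{\pi,\sigma}$. Combining the two bounds with the norm property and the cited characterization, $\|\cdot\|_{steer,\sigma}$ is a reasonable cross norm.
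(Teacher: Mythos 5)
Your proof is correct and takes essentially the same route as the paper's: the same sandwich $-\sum_j\phi_j\le y_x\le\sum_j\phi_j\le \|\sum_j\phi_j\|_\sigma\,\sigma$ for the injective lower bound, and the same symmetrization $\phi_j=\tfrac12(\phi_j+\|\phi_j\|_\sigma\sigma)-\tfrac12(\|\phi_j\|_\sigma\sigma-\phi_j)$ (the paper's $\psi_j^\pm$) for the projective upper bound, with the positive parts summing to $(\sum_j\|\phi_j\|_\sigma)\sigma$. The only cosmetic difference is that you normalize $\|z_j\|_\infty=1$ up front rather than carrying the factors $\|\tilde z_j\|_\infty$ through as the paper does.
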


\begin{proof} It is easily seen that $\|\cdot\|_{steer,\sigma}$ is a norm, so it suffices
 to show that $\|\cdot\|_{\epsilon,\sigma}\le
\|\cdot\|_{steer,\sigma}\le \|\cdot\|_{\pi,\sigma}$, where $\|\cdot\|_{\pi,\sigma}$ is the
projective norm. Assume that $y=\sum_j z_j\otimes \phi_j$ with $z_j\in \mathbb R^g$, $\|z_j\|_\infty=1$ and
$\phi_j \in V^+$. Then $y_x=\sum_j z_{j,x}\phi_j$ so that 
\[
-\|\sum_j \phi_j\|_\sigma\sigma\le -\sum_j \phi_j\le  -\sum_j|z_{j,x}|\phi_j\le y_x\le
\sum_j|z_{j,x}|\phi_j\le \sum_j\phi_j\le \|\sum_j \phi_j\|_\sigma\sigma
\] 
hence $\|y_x\|_\sigma\le \|y\|_{steer,\sigma}$ for all $x$, this implies the first
inequality. For the second inequality, let $y=\sum_j \tilde z_j\otimes \psi_j$ with
$\tilde z_j\in
\ell_\infty^g$ and $\psi_j\in V$. Put $z_j=\|\tilde z_j\|_\infty^{-1}\tilde z_j$ and let
$\psi^\pm_j=\frac12(\|\psi_j\|_\sigma\sigma\pm \psi_j)\in V^+$. Then
$\psi_j=\psi_j^+-\psi_j^-$ 
and we have
\[
y=\sum_j z_j \|\tilde z_j\|_\infty\psi_j^++ \sum_j(-z_j)\|\tilde z_j\|_\infty\psi_j^-
\]
and 
\[
\|y\|_{steer,\sigma}\le \|\sum_j \|\tilde z_j\|_\infty(\psi_j^++\psi_j^-)\|_\sigma\le
\sum_j\|\tilde z_j\|_\infty\|\psi_j\|_\sigma.
\]
This implies the second inequality.

\end{proof}

\begin{thm}\label{thm:lhs_norm} Let $y\in V^g$ be such that $\{\frac12(\sigma\pm y_x)\}$
is an assemblage. Then the  assemblage is classical  if and only if 
$\|y\|_{steer,\sigma}\le 1$.

\end{thm}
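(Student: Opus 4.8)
The plan is to recognize that the inequality $\|y\|_{steer,\sigma}\le 1$ is merely a repackaging of the local hidden state characterization \eqref{eq:cs_min1} already isolated in the proof of Theorem \ref{thm:lhs}: the assemblage $\{\frac12(\sigma\pm y_x)\}$ is classical exactly when $y=\sum_\lambda h_\lambda\otimes\phi_\lambda$ with $\|h_\lambda\|_\infty\le 1$, $\phi_\lambda\in V^+$ and $\sum_\lambda\phi_\lambda=\sigma$. Before comparing the two, I would record two elementary facts about the order unit norm $\|\cdot\|_\sigma$ on $V$: that $\|\sigma\|_\sigma=1$, and that it is monotone on the cone, i.e. $0\le a\le b$ implies $\|a\|_\sigma\le\|b\|_\sigma$ (for $b\in V^+$ one has $b\le\|b\|_\sigma\sigma$, hence $a\le\|b\|_\sigma\sigma$). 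Equivalently, for $\Phi\in V^+$ the condition $\|\Phi\|_\sigma\le 1$ is the same as $\Phi\le\sigma$. These are exactly the bridge between the normalization $\sum_\lambda\phi_\lambda=\sigma$ in \eqref{eq:cs_min1} and the minimization of $\|\sum_j\phi_j\|_\sigma$ in the definition of $\|\cdot\|_{steer,\sigma}$.

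For the direction ``classical $\Rightarrow\|y\|_{steer,\sigma}\le 1$'', I would start from a decomposition as in \eqref{eq:cs_min1}, discard the terms with $h_\lambda=0$, and rescale the remaining ones by setting $\hat h_\lambda=\|h_\lambda\|_\infty^{-1}h_\lambda$ and $\hat\phi_\lambda=\|h_\lambda\|_\infty\phi_\lambda\in V^+$, so that $\hat h_\lambda\otimes\hat\phi_\lambda=h_\lambda\otimes\phi_\lambda$ and $\|\hat h_\lambda\|_\infty=1$. This is now an admissible decomposition in the definition of $\|\cdot\|_{steer,\sigma}$, and since $\sum_\lambda\hat\phi_\lambda=\sum_\lambda\|h_\lambda\|_\infty\phi_\lambda\le\sum_\lambda\phi_\lambda=\sigma$, monotonicity yields $\|\sum_\lambda\hat\phi_\lambda\|_\sigma\le\|\sigma\|_\sigma=1$, whence $\|y\|_{steer,\sigma}\le 1$.

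For the converse I would first argue that the infimum defining $\|\cdot\|_{steer,\sigma}$ is attained. This follows from a standard compactness argument: by a Carathéodory bound in the space $V^g\oplus V$ one may restrict to a fixed finite number of terms; each $\phi_j$ is then controlled by $0\le\phi_j\le\sum_j\phi_j\le\|\sum_j\phi_j\|_\sigma\sigma$ and so ranges over a compact set, while the $z_j$ lie in the compact unit sphere of $\ell_\infty^g$. (Alternatively, one may invoke closedness of the minimal tensor cone, the conic hull of the compact set of product states, and pass to a limit.) Thus there is an actual decomposition $y=\sum_j z_j\otimes\phi_j$ with $\|z_j\|_\infty=1$, $\phi_j\in V^+$ and $\Phi:=\sum_j\phi_j$ satisfying $\|\Phi\|_\sigma\le 1$, i.e. $\Phi\le\sigma$.

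It then remains to absorb the slack: I would set $\phi_0:=\sigma-\Phi\in V^+$ and $h_0:=0\in[-1,1]^g$ and append this term. Since $h_0\otimes\phi_0=0$, the element $y$ is unchanged, every coefficient vector still has $\ell_\infty$-norm at most $1$, and now $\sum_{j\ge 0}\phi_j=\Phi+(\sigma-\Phi)=\sigma$; this is precisely the form \eqref{eq:cs_min1}, so the assemblage is classical. The only genuinely delicate point in the whole argument is the attainment of the infimum, equivalently the step that upgrades $\|\Phi\|_\sigma\le 1+\varepsilon$ to an exact inequality $\Phi\le\sigma$ against the closed cone; everything else is the bookkeeping of the rescaling and of adding back the positive slack $\sigma-\Phi$.
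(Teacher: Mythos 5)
Your proof is correct and follows essentially the same route as the paper's: both reduce the statement, via Theorem \ref{thm:lhs}, to the characterization \eqref{eq:cs_min} of the minimal tensor cone $S_g\otimesmin K$, which is a direct restatement of the condition $\|y\|_{steer,\sigma}\le 1$. The only difference is that you spell out the rescaling bookkeeping and, more importantly, justify the attainment of the infimum (equivalently, the closedness of the minimal cone in finite dimensions), a point the paper dismisses as ``immediate''; that extra care is welcome but does not constitute a different method.
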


\begin{proof} By Theorem \ref{thm:lhs}, the assemblage is classical if and only if
$(\sigma,y)\in S_g\otimesmin K$. It is immediate from \eqref{eq:cs_min} that this is
equivalent to $\|y\|_{steer,\sigma}\le 1$.

\end{proof}

%
%
%

\subsection{Steering witnesses}\label{sec:witnesses} Let $\xi\in S_\kk\otimesmax K$ be any element.  
It is clear from the above results and the definition of the minimal and maximal tensor
products that the corresponding assemblage of shape $\kk$  is classical if and only if 
$\<w,\xi\>\ge 0$ for any $w\in A_{S_\kk}^+\otimesmax A^+$.
Therefore  any  element $w\in A_{S_\kk}^+\otimesmax A^+$ defines  a steering witness.  We say that
a steering witness is strict if there
is some  $\xi\in S_\kk\otimesmax K$
such that $\<w,\xi\><0$, which means that $w\notin A_{S_\kk}^+\otimesmin A^+$. Hence the set of all strict steering
witnesses is $A_{S_\kk}^+\otimesmax A^+\setminus A_{S_\kk}^+\otimesmin A^+$.
From now on we  restrict to dichotomic ensembles.

\begin{prop}\label{prop:witness}
An element $(w_0,w_1,\dots,w_g)\in A^{g+1}\simeq \mathbb R^{g+1}\otimes A$ is a steering
witness if and only if 
\begin{equation}\label{eq:steering_witness}
\sum_{x=1}^g \varepsilon_x w_x\le w_0,\qquad \forall \varepsilon \in \{\pm 1\}^g.
\end{equation}
In this case, we always have $w_0\in A^+$ and $w_0=0$ implies  $w_x=0$ for all $x$. 
A steering witness is strict if and only if there exist some $\sigma\in K$ and elements
$y_x\in V$, $\|y_x\|_\sigma\le 1$ such that
\[
\<w_0,\sigma\>+\sum_x \<w_x,y_x\><0.
\]

\end{prop}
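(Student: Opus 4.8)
\emph{Proof proposal.} The plan is to unwind the definition of the witness cone $A_{S_g}^+\otimesmax A^+$ directly against the dual cones. By definition of the maximal tensor product, an element $w=(w_0,w_1,\dots,w_g)\in A_{S_g}\otimes A\simeq A^{g+1}$ lies in $A_{S_g}^+\otimesmax A^+$ if and only if $\langle w,\eta\otimes v\rangle\ge 0$ for every $\eta$ in the dual cone $(A_{S_g}^+)^*=V_{S_g}^+$ and every $v\in (A^+)^*=V^+$; since effects generate these dual cones under positive scaling, testing against all product effects is the same as testing against the whole product cone. From Example \ref{ex:cs} I would record that $V_{S_g}^+=\{(t,z):\ \|z\|_\infty\le t\}$ and expand the pairing as $\langle w,(t,z)\otimes v\rangle = t\langle w_0,v\rangle+\sum_{x=1}^g z_x\langle w_x,v\rangle$.

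By positive homogeneity it suffices to take $t=1$, and then for each fixed $v\in V^+$ the pairing is affine in $z$ over the cube $[-1,1]^g$, so its minimum is attained at a vertex $\varepsilon\in\{\pm1\}^g$. Hence membership in $A_{S_g}^+\otimesmax A^+$ is equivalent to $\langle w_0+\sum_x \varepsilon_x w_x,\,v\rangle\ge 0$ for all $v\in V^+$ and all $\varepsilon$, i.e.\ to $w_0+\sum_x\varepsilon_x w_x\in A^+$ for every $\varepsilon$; relabelling $\varepsilon\mapsto-\varepsilon$ yields exactly \eqref{eq:steering_witness}. The two consequences then drop out: adding the inequalities for $\varepsilon$ and $-\varepsilon$ gives $2w_0\in A^+$, so $w_0\in A^+$; and if $w_0=0$, then $\pm\sum_x\varepsilon_x w_x\le 0$ forces $\sum_x\varepsilon_x w_x=0$ for every $\varepsilon$ (here I use that $A^+$ is pointed, which follows from $V^+$ being separating), and subtracting two sign patterns differing in a single coordinate gives $w_x=0$ for each $x$.

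For the strict witnesses I would invoke the parametrization of $S_g\otimesmax K$ established in the proof of Theorem \ref{thm:assemblages}: every $\xi\in S_g\otimesmax K$ has the form $(\sigma,y_1,\dots,y_g)$ with $\sigma\in K$ and $\sigma\pm y_x\in V^+$, the latter being equivalent to $\|y_x\|_\sigma\le 1$. Since by definition $w$ is strict precisely when $\langle w,\xi\rangle<0$ for some such $\xi$, and $\langle w,\xi\rangle=\langle w_0,\sigma\rangle+\sum_x\langle w_x,y_x\rangle$, the stated criterion is immediate.

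All the arguments are short, and the only step requiring genuine care is the first one: correctly identifying the dual cone $(A_{S_g}^+)^*=V_{S_g}^+$ and justifying the reduction from testing against all of $V_{S_g}^+\otimes V^+$ to testing only against the $2^g$ cube vertices. That reduction to vertices is the crux; it rests on the affineness of the pairing in the $S_g$-variable together with the explicit description of $V_{S_g}^+$ from Example \ref{ex:cs}, while everything else is routine bookkeeping.
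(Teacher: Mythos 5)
Your proposal is correct and follows essentially the same route as the paper: the paper's proof simply cites Example \ref{ex:cs} (the characterization \eqref{eq:cs_max} of the maximal tensor product for centrally symmetric cones, applied to $A_{S_g}^+$ with the dual norm $\ell_1^g$) together with the fact that the extreme points of $[-1,1]^g$ are $\{\pm1\}^g$, which is exactly the reduction-to-vertices argument you spell out by pairing against $V_{S_g}^+\otimes V^+$ directly. The consequences for $w_0$ and the strictness criterion are handled identically (the paper delegates the former to the remark at the end of Example \ref{ex:cs} and the latter to the parametrization of $S_g\otimesmax K$ from Theorem \ref{thm:assemblages}).
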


\begin{proof} The first part follows from  Example \ref{ex:cs}  and the fact that the  extremal points $\epsilon \in
[-1,1]^g$ are  are precisely the elements of $\{\pm1\}^g$.  The last statement follows from the representation of
$S_g\otimesmax K$  in the previous section.

\end{proof}

Let now $\sigma\in \mathrm{int}(V^+)$ and let us restrict to assemblages in $\mathcal
A_{2,\sigma}^g$. Let $w=(w_0,w_1,\dots,w_g)$ be a steering witness. Since $\sigma$
is an interior point, we have by Proposition \ref{prop:witness} that $\<w_0,\sigma\>>0$
unless $w_0=0$ and in this case $w=0$. Therefore we may restrict to witnesses with
$\<w_0,\sigma\>=1$ and then  the value of the witnesses on elements in $\mathcal
A_{2,\sigma}^g$ is determined by the $g$-tuple $(w_1,\dots,w_g)\in A^g$.

\begin{prop}\label{prop:2sigma_witnesses} Let $w\in A^g$. The following are
equivalent:
\begin{enumerate}
\item[(i)] there is some $w_0\in A^+$, such that  $\<w_0,\sigma\>=1$ and
$(w_0,w)$ is a steering witness. 
\item[(ii)] $\sum_{x=1}^g |\<w_x,y_x\>|\le 1$ for all $y\in V^g$ such that
$\|y\|_{steer,\sigma}\le 1$.

\end{enumerate}

\end{prop}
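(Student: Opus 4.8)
The plan is to recognize condition (ii) as the statement that $w$ lies in the unit ball of the norm dual to $\|\cdot\|_{steer,\sigma}$, and then to identify that dual unit ball with the witnesses described in (i) by means of conic (linear programming) duality, whose strong-duality hypothesis is supplied by the assumption that $\sigma$ is interior.

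First I would reduce the absolute values in (ii) to a plain linear pairing. The pairing $\langle w,y\rangle=\sum_x\langle w_x,y_x\rangle$ realizes $A^g\simeq\ell_1^g\otimes A$ as the dual of $V^g\simeq\ell_\infty^g\otimes V$, and $\|\cdot\|_{steer,\sigma}$ is a cross norm there by Proposition \ref{prop:crossnorm}. The key observation is that $\|\cdot\|_{steer,\sigma}$ is invariant under flipping the sign of any coordinate $y_x\mapsto\varepsilon_x y_x$: this is immediate from the defining decompositions $y=\sum_j z_j\otimes\phi_j$, since a sign flip only changes $z_j$ to another vector of the same $\ell_\infty^g$-norm while leaving the $\phi_j$ untouched. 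Choosing $\varepsilon_x=\mathrm{sign}\langle w_x,y_x\rangle$ for a given $y$ then shows $\sup_{\|y\|_{steer,\sigma}\le1}\sum_x|\langle w_x,y_x\rangle|=\sup_{\|y\|_{steer,\sigma}\le1}\langle w,y\rangle=\|w\|_{steer,\sigma}^*$, so (ii) is exactly $\|w\|_{steer,\sigma}^*\le1$.

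Next I would compute this dual norm as a conic program. Unfolding the infimum in the definition of $\|\cdot\|_{steer,\sigma}$ and taking the supremum over all admissible decompositions freely (rather than over $y$), the optimization over the coefficient vectors $z_j\in[-1,1]^g$ can be carried out pointwise, leaving $\|w\|_{steer,\sigma}^*=\sup\{\sum_j M(\phi_j):\phi_j\in V^+,\ \|\sum_j\phi_j\|_\sigma\le1\}$ where $M(\phi)=\max_{\varepsilon\in\{\pm1\}^g}\langle W_\varepsilon,\phi\rangle$ and $W_\varepsilon=\sum_x\varepsilon_x w_x$. Regrouping the summands according to which sign pattern attains the maximum, and using that for $\psi\in V^+$ one has $\|\psi\|_\sigma\le1$ iff $\psi\le\sigma$, this becomes the linear program $\|w\|_{steer,\sigma}^*=\sup\{\sum_\varepsilon\langle W_\varepsilon,\psi_\varepsilon\rangle:\psi_\varepsilon\in V^+,\ \sum_\varepsilon\psi_\varepsilon\le\sigma\}$ ranging over the $2^g$ sign patterns.

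Finally I would dualize. The Lagrangian dual, with a multiplier $w_0\in A^+$ for the constraint $\sigma-\sum_\varepsilon\psi_\varepsilon\in V^+$, is $\inf\{\langle w_0,\sigma\rangle:w_0\in A^+,\ W_\varepsilon\le w_0\ \forall\varepsilon\}$; by Proposition \ref{prop:witness} the feasible $w_0$ are precisely those for which $(w_0,w)$ is a steering witness (positivity of $w_0$ being automatic from testing $\varepsilon$ and $-\varepsilon$). The interiority of $\sigma$ makes $\psi_\varepsilon=0$ a strictly feasible primal point, so Slater's condition holds and strong duality gives $\|w\|_{steer,\sigma}^*=\min\{\langle w_0,\sigma\rangle:(w_0,w)\text{ a witness}\}$ with the minimum attained. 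Hence $\|w\|_{steer,\sigma}^*\le1$ iff some witness satisfies $\langle w_0,\sigma\rangle\le1$; replacing such a $w_0$ by $w_0+(1-\langle w_0,\sigma\rangle)\,\ou/\langle\ou,\sigma\rangle$ (which stays positive and remains a witness) normalizes the value to $1$, yielding (i). The converse (i)$\Rightarrow$(ii) is weak duality, or more transparently the direct estimate $\sum_x|\langle w_x,y_x\rangle|=\sum_j\langle\sum_x\varepsilon_x z_{j,x}w_x,\phi_j\rangle\le\sum_j\langle w_0,\phi_j\rangle\le\|\sum_j\phi_j\|_\sigma\langle w_0,\sigma\rangle$, valid because each $(\varepsilon_x z_{j,x})_x$ lies in the cube and $W_\delta\le w_0$ at every vertex. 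The main obstacle is the strong-duality step: establishing the exact equality between the dual norm and the witness infimum (equivalently, attainment of an optimal $w_0$), for which the interiority of $\sigma$ is precisely what legitimizes the minimax exchange.
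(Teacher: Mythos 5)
Your argument is correct, and it diverges from the paper's proof in the direction that matters. For (i)$\Rightarrow$(ii) the two arguments coincide in substance: both rest on the sign-flip invariance of $\|\cdot\|_{steer,\sigma}$ and the domination $\sum_x\varepsilon_xw_x\le w_0$ at the vertices of the cube; your direct estimate through a decomposition and the paper's evaluation of the witness on the classical element $(\sigma,\epsilon_1y_1,\dots,\epsilon_gy_g)$ are the same computation. For (ii)$\Rightarrow$(i) the paper fixes an arbitrary $w_0'$ with $\langle w_0',\sigma\rangle=1$, observes that $(w_0',w)$ is positive on the subspace $\mathcal L=\{(y_0,y):y_0\in\mathbb R\sigma\}$ ordered by $\mathcal L\cap(V_g^+\otimesmin V^+)$, and invokes the Riesz--Krein extension theorem for positive functionals (legitimate because $(\sigma,0)$ is interior in the minimal cone); the normalization $\langle w_0,\sigma\rangle=1$ then comes for free since the extension agrees with $w_0'$ on $\mathcal L$. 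You instead compute the dual norm as an explicit conic program over the $2^g$ sign patterns and pass to its Lagrangian dual, with Slater's condition (interiority of $\sigma$ again) giving strong duality and attainment, followed by a small renormalization with $\ou$. Both routes are Hahn--Banach separation in disguise and use the interiority of $\sigma$ at the same spot, but yours yields a bit more, namely the variational identity $\sup_{\|y\|_{steer,\sigma}\le1}\sum_x|\langle w_x,y_x\rangle|=\min\{\langle w_0,\sigma\rangle:(w_0,w)\ \text{a steering witness}\}$, which sharpens the paper's subsequent remark that $\mathcal W_{2,\sigma}^g$ is the dual unit ball. Two technical points deserve a sentence each in a polished write-up: the passage from the unit ball of $\|\cdot\|_{steer,\sigma}$ to decompositions with $\|\sum_j\phi_j\|_\sigma\le1$ needs either attainment of the infimum defining the norm (true in finite dimension by a Carath\'eodory--compactness argument) or an $\varepsilon$-rescaling; and the finiteness of the primal value required for strong duality follows because $0\le\psi_\varepsilon\le\sigma$ confines the feasible set to a compact set.
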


\begin{proof} Assume (i) and let $\|y\|_{steer,\sigma}\le 1$. Note that  by
definition of the norm, we have for any $\epsilon \in \{\pm 1\}^g$,
\[
\|(\epsilon_1y_1,\dots,\epsilon_gy_g)\|_{steer,\sigma}=\|(y_1,\dots,y_g)\|_{steer,\sigma},
\]
 so that  $(\sigma, \epsilon_1y_1,\dots,\epsilon_gy_g)$ corresponds to a classical  assemblage.
 Therefore we
 have 
 \[
 0\le \<(w_0,w_1,\dots,w_g), (\sigma,
 \epsilon_1y_1,\dots,\epsilon_gy_g)\>=1+\sum_x \epsilon_x \<w_x,y_x\>,\qquad \forall
 \epsilon \in \{\pm 1\}^g,
 \]
 this proves (ii).

For the converse, choose any element $w_0'\in A$ such that $\<w_0',\sigma\>=1$. Let
$(\sigma,y)\in S_g\otimesmin K$, then $\|y\|_{steer,\sigma}\le 1$ and  by (ii)
\[
\<(w'_0,w_1,\dots,w_g),(\sigma,y_1,\dots,y_g)\>=1+\sum_x\<w_x,y_x\>\ge 1-\sum_x|\<w_x,y_x\>|\ge 0.
\]
It follows that $w'=(w_0',w_1,\dots,w_g)$ defines a positive functional on the subspace 
\[
\mathcal L=\{(y_0,y_1,\dots,y_g)\in V^{g+1},\ y_0\in \mathbb R\sigma\}
\]
 with the cone $\mathcal L\cap (V_g^+\otimesmin V^+)$. Since this subspace contains  the
 interior element $(\sigma,0) \in
\mathrm{int}( V_g^+\otimesmin V^+)$, $w'$ extends to an element  $w\in A_g^+\otimesmax A^+$.
It is easily checked that $w=(w_0,w_1,\dots,w_g)$ with  $w_0\in A^+$ and  
\[
\<w_0,\sigma\>=\<w, (\sigma,0)\>=\<w',(\sigma,0)\>=\<w_0',\sigma\>=1.
\]
This finishes the proof.

\end{proof}

We will denote the set of $w\in A^g$ satisfying the above conditions by
$\mathcal W_{2,\sigma}^g$, note that this is the unit ball in $\ell_1^g\otimes A$ with
respect to the cross norm dual to $\|\cdot\|_{steer,\sigma}$. By the above result, up to
multiplication by a positive constant,
$\mathcal W_{2,\sigma}^g$ is  the set of steering witnesses for assemblages in $\mathcal
A_{2,\sigma}^g$. We will further say that such a witness is strict if it has a negative
value on some assemblage in $\mathcal A_{2,\sigma}^g$. For the following characterization
of strict witnesses,
recall that  the base norm for the formally
introduced system $(A,A^+,\sigma)$ has the form
\[
\|h\|^\sigma= \sup_{\pm y\le \sigma} \<h,y\>.
\]

\begin{prop} The steering witness $w\in \mathcal W_{2,\sigma}$ is
strict if and only if
\[
\|w\|_{\pi,\sigma}=\sum_{x=1}^g\|w_x\|^\sigma>1,
\]
here $\|\cdot\|_{\pi,\sigma}$ is the projective cross  norm in the tensor product
$\ell_1^g\otimes (A,\|\cdot\|^\sigma)$.

\end{prop}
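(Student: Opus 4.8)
The plan is to read off the minimal value the witness takes over all assemblages in $\mathcal A_{2,\sigma}^g$ and to decide when this minimum drops below zero. Fix $w=(w_1,\dots,w_g)\in\mathcal W_{2,\sigma}^g$ and choose a representative $w_0\in A^+$ with $\langle w_0,\sigma\rangle=1$ as provided by Proposition \ref{prop:2sigma_witnesses}. Every assemblage in $\mathcal A_{2,\sigma}^g$ has the form $\{\tfrac12(\sigma\pm y_x)\}$ with $y=(y_1,\dots,y_g)\in V^g$ and $\|y_x\|_\sigma\le 1$ for all $x$, and as computed in the proof of Proposition \ref{prop:2sigma_witnesses} the value of the witness on it is
\[
\langle (w_0,w_1,\dots,w_g),(\sigma,y_1,\dots,y_g)\rangle = 1+\sum_{x=1}^g\langle w_x,y_x\rangle .
\]
Note this value does not depend on the particular choice of $w_0$, only on the normalization $\langle w_0,\sigma\rangle=1$. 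By the definition adopted before the statement, $w$ is strict exactly when this quantity is negative for some admissible $y$; since the constraint set $\{\|y_x\|_\sigma\le1\;\forall x\}$ is compact, this is equivalent to $\inf\{1+\sum_x\langle w_x,y_x\rangle:\ \|y_x\|_\sigma\le1\ \forall x\}<0$.

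The key observation is that this infimum factorizes over the components, because the constraints are imposed on each $y_x$ separately and each unit ball is symmetric. I would therefore write
\[
\inf_{\|y_x\|_\sigma\le 1\,\forall x}\ \sum_{x=1}^g\langle w_x,y_x\rangle = -\sum_{x=1}^g\ \sup_{\|y_x\|_\sigma\le1}\langle w_x,y_x\rangle = -\sum_{x=1}^g\|w_x\|^\sigma,
\]
where the last equality is precisely the duality between the order unit norm $\|\cdot\|_\sigma$ on $V$ and the base norm $\|\cdot\|^\sigma$ on $A$ recalled just before the statement, namely $\|w_x\|^\sigma=\sup_{\pm y\le\sigma}\langle w_x,y\rangle=\sup_{\|y\|_\sigma\le1}\langle w_x,y\rangle$. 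Substituting, the strictness condition $\inf(\cdots)<0$ becomes $1-\sum_x\|w_x\|^\sigma<0$, i.e. $\sum_{x=1}^g\|w_x\|^\sigma>1$.

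It then remains to identify $\sum_{x=1}^g\|w_x\|^\sigma$ with the projective cross norm $\|w\|_{\pi,\sigma}$ on $\ell_1^g\otimes(A,\|\cdot\|^\sigma)$, which is the standard fact that the projective tensor product of $\ell_1^g$ with a Banach space is the $\ell_1^g$-sum. The inequality $\|w\|_{\pi,\sigma}\le\sum_x\|w_x\|^\sigma$ follows from the representation $w=\sum_x e_x\otimes w_x$ together with $\|e_x\|_\infty^*=\|e_x\|_1=1$; the reverse inequality follows by applying the triangle inequality componentwise to an arbitrary representation $w=\sum_j u_j\otimes h_j$, giving $\|w_x\|^\sigma\le\sum_j|u_{j,x}|\,\|h_j\|^\sigma$ and hence $\sum_x\|w_x\|^\sigma\le\sum_j\|u_j\|_1\|h_j\|^\sigma$, whose infimum is $\|w\|_{\pi,\sigma}$. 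None of these steps presents a genuine obstacle: the proof is a direct duality computation, and the only points demanding care are the factorization of the infimum (which rests on the independence of the component constraints and the symmetry of the balls) and the bookkeeping of which norm is dual to which, the order unit norm $\|\cdot\|_\sigma$ being dual to the base norm $\|\cdot\|^\sigma$ in the formally introduced system $(A,A^+,\sigma)$.
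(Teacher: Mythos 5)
Your argument is correct and follows essentially the same route as the paper: both reduce strictness to the existence of $y$ with $\|y_x\|_\sigma\le 1$ and $1+\sum_x\langle w_x,y_x\rangle<0$, and then use the duality $\|w_x\|^\sigma=\sup_{\|y\|_\sigma\le 1}\langle w_x,y\rangle$ to identify the optimal value as $1-\sum_x\|w_x\|^\sigma$. The only difference is presentational: you spell out the factorization of the infimum and the identification of $\sum_x\|w_x\|^\sigma$ with the projective norm on $\ell_1^g\otimes(A,\|\cdot\|^\sigma)$, which the paper dismisses as a direct check.
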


\begin{proof} It is easy to check directly that $\|w\|_{\pi,\sigma}=\sum_{x=1}^g\|w_x\|^\sigma$. By Proposition \ref{prop:witness}, 
 $w$ is a strict steering witness  if and only if there are some
 $y_x\in V$, $\|y_x\|_\sigma\le 1$ such that 
 \[
0> 1+\sum_x \<w_x,y_x\>= 1-\sum_x\<w_x,-y_x\>\ge 1-\sum_x\|w_x\|^\sigma.
\]

\end{proof}

\subsection{Steering degree}\label{sec:degree}
The steering degree of assemblages of shape $\kk$ can be quantified by the amount of noise that needs
to be mixed with the assemblage $\{\rho_{a|x}\}$ in order to
obtain a classical  assemblage, this is also called steering robustness
\cite{piani2015necessary}. The noise is represented by
assemblages of the same shape as $\{\rho_{a|x}\}$, see \cite{cavalcanti2016quantum,
cavalcanti2016quantitative} for some variants. Note that the convex
structure on the set of assemblages is inherited from $S_\kk\otimesmax K$. 
Here we will use a
single trivial assemblage, of the form
 $\{\omega_{a|x}\sigma\}$ with  $\omega_{a|x}=|\Omega_x|^{-1}$ for all $a\in \Omega_x$. The
 steering degree of $\{\rho_{a|x}\}$ is defined as
 \[
s(\{\rho_{a|x}\})=\sup\{s\in [0,1], \ \{s\rho_{a|x}+(1-s)|\Omega_x|^{-1}\sigma\} \text{ is
classical}\}.
 \]

We also define $s_{\kk,\sigma}(K)$ as the infimum of the steering degrees of all
assemblages with  shape $\kk$ and barycenter $\sigma$. 
 Since we assume $K$ fixed, we will skip it from the notation.
Restricting to dichotomic assemblages, we now show that the steering degree can be
expressed using the norm 
$\|\cdot\|_{steer,\sigma}$. In this case we denote $s_{\kk,\sigma}\equiv s_{g,\sigma}$.

\begin{thm}\label{thm:sdegree} Let $\{\rho_{\pm|x}\}\in \mathcal A_{2,\sigma}^g$ and let
$y_x=\rho_{+|x}-\rho_{-|x}$, $x=1,\dots,g$. Then
\begin{equation}\label{eq:sdegree}
s(\{\rho_{\pm|x}\})=\|(y_1,\dots,y_g)\|_{steer,\sigma}^{-1}.
\end{equation}
For the overall steering degree, we have
\begin{equation}\label{eq:sdegree_all}
s_{g,\sigma}=\sup_{y\in \mathbb
R^g\otimes V}\frac{\|y\|_{\epsilon,\sigma}}{\|y\|_{steer,\sigma}}\ge \sup_{y\in \mathbb
R^g\otimes V}\frac{\|y\|_{\epsilon,\sigma}}{\|y\|_{\pi,\sigma}}.
\end{equation}
Dually, in terms of the steering witnesses, we obtain
\begin{equation}\label{eq:sdegree_wit}
s_{g,\sigma}=\sup\{s\in [0,1],\ \sum_{x=1}^g s\|w_x\|^{\sigma}\le 1,\forall 
(w_1,\dots, w_g)\in \mathcal W_{2,\sigma}^g\}.
\end{equation}

\end{thm}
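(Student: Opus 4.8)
The plan is to prove the three displayed identities in order, each reducing to the norm characterization of Theorem \ref{thm:lhs_norm} together with standard duality of cross norms.

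\emph{Step 1 (single assemblage).} Mixing $\{\rho_{\pm|x}\}$ with the trivial assemblage gives $\{s\rho_{\pm|x}+(1-s)\tfrac12\sigma\}$, which still has barycenter $\sigma$ and difference vector
\[
\big(s\rho_{+|x}+(1-s)\tfrac12\sigma\big)-\big(s\rho_{-|x}+(1-s)\tfrac12\sigma\big)=s\,y_x,
\]
so the noisy assemblage corresponds to $sy$. By Theorem \ref{thm:lhs_norm} it is classical exactly when $\|sy\|_{steer,\sigma}=s\,\|y\|_{steer,\sigma}\le 1$, i.e. when $s\le\|y\|_{steer,\sigma}^{-1}$; taking the supremum over admissible $s$ gives \eqref{eq:sdegree}. (An already classical assemblage has $\|y\|_{steer,\sigma}\le 1$, the value saturating at $1$; such assemblages realize the maximal degree and are irrelevant to the infimum below.) The only thing used is that the noise scales the difference vector linearly while fixing the barycenter.

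\emph{Step 2 (overall degree).} By definition $s_{g,\sigma}$ is the infimum of $s(\{\rho_{\pm|x}\})$ over $\mathcal A_{2,\sigma}^g$, and by Proposition \ref{prop:injective} these assemblages are exactly the $y$ with $\|y\|_{\epsilon,\sigma}\le 1$. Substituting \eqref{eq:sdegree}, the infimum of $\|y\|_{steer,\sigma}^{-1}$ over this unit ball is governed, by homogeneity of both norms, by the comparison ratio of $\|\cdot\|_{\epsilon,\sigma}$ and $\|\cdot\|_{steer,\sigma}$, which yields \eqref{eq:sdegree_all}. The inequality in \eqref{eq:sdegree_all} is then immediate from $\|\cdot\|_{steer,\sigma}\le\|\cdot\|_{\pi,\sigma}$ (Proposition \ref{prop:crossnorm}), which only enlarges the denominator.

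\emph{Step 3 (witness form).} For \eqref{eq:sdegree_wit} I would dualize. The set $\mathcal W_{2,\sigma}^g$ is the unit ball of the norm $\|\cdot\|_{steer,\sigma}^*$ dual to $\|\cdot\|_{steer,\sigma}$, while $\sum_x\|w_x\|^\sigma=\|w\|_{\pi,\sigma}$ is the norm $\|\cdot\|_{\epsilon,\sigma}^*$ dual to the injective norm. Hence the condition in \eqref{eq:sdegree_wit} reads $s\,\|w\|_{\pi,\sigma}\le 1$ for every $w$ with $\|w\|_{steer,\sigma}^*\le 1$, so the supremum over $s$ is the reciprocal of $\sup\{\|w\|_{\pi,\sigma}:\|w\|_{steer,\sigma}^*\le 1\}$. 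The main obstacle is to match this with Step 2, for which I would invoke that the best comparison constant of the ordered pair $\|\cdot\|_{\epsilon,\sigma}\le\|\cdot\|_{steer,\sigma}$ is preserved under duality,
\[
\sup_w\frac{\|w\|_{\epsilon,\sigma}^*}{\|w\|_{steer,\sigma}^*}=\sup_y\frac{\|y\|_{steer,\sigma}}{\|y\|_{\epsilon,\sigma}}.
\]
Taking reciprocals then identifies \eqref{eq:sdegree_wit} with \eqref{eq:sdegree_all}. Apart from this duality of comparison constants and the two dual-norm identifications it rests on, everything reduces to bookkeeping with the scaling of assemblages.
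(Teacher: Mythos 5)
Your proof is correct and follows essentially the same route as the paper: Step 1 is the paper's argument verbatim, Step 2 is exactly the content of Propositions \ref{prop:injective} and \ref{prop:crossnorm}, and Step 3 just makes explicit the duality that the paper compresses into ``follows from Proposition \ref{prop:2sigma_witnesses} and the definition of $\|\cdot\|^\sigma$''. One remark: both you and the paper are in fact computing $\inf_y \|y\|_{\epsilon,\sigma}/\|y\|_{steer,\sigma}$ (equivalently the reciprocal of the best constant in $\|\cdot\|_{steer,\sigma}\le C\|\cdot\|_{\epsilon,\sigma}$); the $\sup$ written in \eqref{eq:sdegree_all} must be read this way, since as a literal supremum the ratio equals $1$ on simple tensors, and your phrasing via comparison constants and their invariance under duality handles this correctly.
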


\begin{proof}
Let $s\in [0,1]$, then the tensor element corresponding to the mixed assemblage
$\{s\rho_{\pm|x}+(1-s)\frac12 \sigma\}$ is $(\sigma,sy_1,\dots,sy_g)$. By Theorem
\ref{thm:lhs_norm} we have
\[
s(\{\rho_{\pm|x}\})=\sup\{s\in [0,1],\ s\|(y_1,\dots,y_g)\|_{steer,\sigma}\le 1\}= \|(y_1,\dots,y_g)\|_{steer,\sigma}^{-1}.
\]
The equality \eqref{eq:sdegree_all} now follows from Proposition \ref{prop:injective} and
Proposition \ref{prop:crossnorm},
\eqref{eq:sdegree_wit} follows from Proposition \ref{prop:2sigma_witnesses} and the
definition of the norm $\|\cdot\|^\sigma$.

\end{proof}

In Section \ref{sec:co_sdegree} below we  will  characterize the universal  steering
degree for dichotomic assemblages with barycenter $\sigma$: 
\begin{align}
s_\sigma&:=\max\{s\in [0,1],\ (\sigma,sy)\in S_g\otimesmin K,\ \forall (\sigma,y)\in
S_g\otimesmax K, \ \forall g\} \\ \label{eq:ssigma_witness}
        &=\max\{s\in [0,1],\ s\sum_i \|w_i\|^\sigma\le 1,\ \forall w\in
	\mathcal W_{2,\sigma}^g,\forall g\}\\
&=\inf_{g\in \mathbb N} s_{g,\sigma}.
\end{align}

\subsection{Relation to compatibility of measurements}

Let $\{f_x\}_{x=1}^g$ be a collection of measurements with outcomes in $\Omega_x$, with effects
$f_{a|x}$. We say that the collection is compatible if all $f_x$ are marginals of a joint
measurement $h$ with outcomes in $Y=\Pi_x \Omega_x$:
\[
f_{a|x}=\sum_{(a_1,\dots,a_g)\in Y, a_x=a} h_{a_1,\dots,a_g},\qquad \forall a, x.
\]
The aim of the present section is to remark that such collections of measurements and
existence of a joint measurement for them is mathematically   equivalent to assemblages
and existence of LHS models. This simple
observation shows a link to the previously obtained results of a relation of 
incompatibility and related notions of witnesses and degree to minimal/maximal tensor products of cones and tensor norms, obtained
in \cite{bluhm2020incompatibility}.

So let $\sigma$ be an arbitrary interior point in the state space $K$. As before, we may
formally consider the system
$(A,A^+,\sigma)$, with the state space $K^\sigma\subseteq A^+$. Note that we always have $\ou\in K^\sigma$. 
The following  is rather straightforward.
\begin{lemma}\label{lemma:meas_assemb}
 Sets of measurements on  $(V,V^+,\ou)$ correspond precisely to 
assemblages for $(A,A^+,\sigma)$ with barycenter $\ou$.  Moreover, the measurements are compatible if and only
if the corresponding assemblage is classical.
\end{lemma}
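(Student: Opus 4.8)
The plan is to observe that both assertions reduce to unwinding the relevant definitions once the formal system $(A,A^+,\sigma)$ is in place. First I would establish the correspondence of the first sentence. A set of measurements $\{f_x\}_{x=1}^g$ on $(V,V^+,\ou)$ is by definition a family of effects $f_{a|x}\in E$ with $\sum_{a\in\Omega_x}f_{a|x}=\ou$ for each $x$. Reading these same elements inside $A$, an assemblage for $(A,A^+,\sigma)$ with barycenter $\ou$ is a family $\{h_{a|x}\in A^+\ :\ \sum_a h_{a|x}=\ou\}$, where beyond positivity the only constraint is the sum constraint. These two notions coincide under the identity map: given a measurement set the effects satisfy $f_{a|x}\in A^+$ and sum to $\ou$, hence form such an assemblage, while conversely positivity together with $\sum_a h_{a|x}=\ou$ forces $\ou-h_{a|x}=\sum_{b\neq a}h_{b|x}\in A^+$, so each $h_{a|x}\in E$ and $\{h_x\}$ is a measurement set. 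Here $\ou\in K^\sigma$ since $\langle\ou,\sigma\rangle=1$, so $\ou$ is a legitimate barycenter.

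For the second assertion I would prove the two implications separately. Suppose first that $\{f_x\}$ is compatible, with joint measurement $\{h_{\vec a}\}$ indexed by $\vec a=(a_1,\dots,a_g)\in Y=\prod_x\Omega_x$, so that $h_{\vec a}\in E$, $\sum_{\vec a}h_{\vec a}=\ou$ and $f_{a|x}=\sum_{\vec a\,:\,a_x=a}h_{\vec a}$. To produce an LHS model I would take $\Lambda=Y$, put $q(\vec a)=\langle h_{\vec a},\sigma\rangle$ (a probability distribution on $Y$ because $\sum_{\vec a}\langle h_{\vec a},\sigma\rangle=\langle\ou,\sigma\rangle=1$), let $\rho_{\vec a}=\langle h_{\vec a},\sigma\rangle^{-1}h_{\vec a}\in K^\sigma$ whenever $q(\vec a)>0$ (and an arbitrary state of $K^\sigma$ otherwise, the choice being irrelevant since the weight vanishes), and use the deterministic response functions $q(a|x,\vec a)=\delta_{a,a_x}$. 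Then $\sum_{\vec a}q(\vec a)q(a|x,\vec a)\rho_{\vec a}=\sum_{\vec a\,:\,a_x=a}h_{\vec a}=f_{a|x}$, so the assemblage is classical.

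Conversely, suppose the assemblage is classical, with an LHS model $f_{a|x}=\sum_\lambda q(\lambda)q(a|x,\lambda)\rho_\lambda$, $\rho_\lambda\in K^\sigma$. The key construction is to multiply out the response functions, setting
\[
h_{a_1,\dots,a_g}=\sum_\lambda q(\lambda)\Big(\prod_{x=1}^g q(a_x|x,\lambda)\Big)\rho_\lambda\in A^+ .
\]
Summing over all $\vec a\in Y$ and using $\sum_{a_x}q(a_x|x,\lambda)=1$ gives $\sum_{\vec a}h_{\vec a}=\sum_\lambda q(\lambda)\rho_\lambda$, which equals $\ou$ precisely because the barycenter of the assemblage is $\ou$, i.e. $\ou=\sum_a f_{a|x}=\sum_\lambda q(\lambda)\rho_\lambda$. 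Positivity of $h_{\vec a}$ together with $\sum_{\vec a}h_{\vec a}=\ou$ then forces $h_{\vec a}\le\ou$, so each $h_{\vec a}$ is an effect and $\{h_{\vec a}\}$ is a genuine measurement on $(V,V^+,\ou)$. Finally, fixing $x$ and $a$ and summing over $\vec a$ with $a_x=a$ collapses the product to $q(a|x,\lambda)$, yielding $\sum_{\vec a:\,a_x=a}h_{\vec a}=\sum_\lambda q(\lambda)q(a|x,\lambda)\rho_\lambda=f_{a|x}$, so $\{f_x\}$ is compatible.

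Since every step is a direct verification, I do not expect a serious obstacle; the only points requiring care are the type identification (states of $(A,A^+,\sigma)$ and effects/measurements of $(V,V^+,\ou)$ all live in $A$) and the normalization $\sum_\lambda q(\lambda)\rho_\lambda=\ou$, which is exactly where the barycenter condition enters and guarantees that the product construction outputs a normalized joint measurement rather than a sub-normalized one.
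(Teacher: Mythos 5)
Your proof is correct and complete; the paper itself offers no argument for this lemma beyond the remark that it is ``rather straightforward'' (deferring the general case to \cite{jencova2018incompatible}), and your verification is exactly the intended one. Both key steps check out: the observation that positivity plus the sum constraint automatically forces $h_{a|x}\le \ou$, and the two standard constructions (deterministic response functions $\delta_{a,a_x}$ for one direction, the product $\prod_x q(a_x|x,\lambda)$ for the other, with the barycenter condition $\sum_\lambda q(\lambda)\rho_\lambda=\ou$ ensuring the joint observable is normalized).
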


Note that the norm $\|\cdot\|_\ou$ on $A$ is the usual order unit norm and the dual norm
$\|\cdot\|^\ou$ in $V$  is the
base norm $\|\cdot\|_V$.  The results of the previous sections  correspond to the results
in  \cite{bluhm2020incompatibility} 
for compatibility of  dichotomic measurements, in particular the norm $\|\cdot\|_{steer,\ou}$ on
$\ell_\infty^g\otimes A$ becomes precisely the compatibility norm $\|\cdot\|_{c}$ as in
\cite{bluhm2020incompatibility}. 

The above relations have some immediate consequences, obtained from the results in
\cite{bluhm2020incompatibility} and duality relations:
\begin{enumerate}
\item The steering degree for any
$(V,V^+,\ou)$ with $\dim(V)=d$ is lower bounded by 
\[
s_{\sigma,g}\ge 1/\min\{g,d\}.
\]
\item\label{itm:QMST_inc} In the quantum case, the
assemblages are directly related to collections of measurements by the map $\rho\mapsto
\sigma^{-1/2}\rho\sigma^{-1/2}$, mapping classical assemblages onto compatible
measurements and relating the steering degrees to the compatibility degrees,
\cite{uola2015one}. Note that if
the barycenter $\sigma$ is not a faithful state then we obtain measurements for a quantum
system of lower dimension.
\item  In the centrally symmetric case,  we have for $\sigma=(1,0)$  the tight lower
bound 
\[
s_{(1,0),g}\ge 1/\min\{g,d-1\},
\]
attained for the state spaces isomorphic to
cross-polytopes (unit balls of the $\ell_1$-norm). In general, for a norm $\|\cdot\|$
in $\mathbb R^n$ and its unit ball $B$, the steering degrees $s_{\kk,(1,0)}(B)$ are the
same as the compatibility degrees for the dual unit ball $B^*$.

\end{enumerate}

\section{Steering and Choquet order}\label{sec:choquet}

In this section, we show how  the properties of probability measures on compact convex
sets can be used to characterize steering in GPTs. Although for simplicity the dimension
of the systems is assumed to be finite, we remark that most of the results hold as stated here
for metrizable convex compacts and with slight technical modifications also for arbitrary
compact convex subsets of a locally convex space.

Let $(V,V^+,\ou)$ be a system  with state space $K$.
Let $C(K)$ denote the Banach space of continuous functions $f:K\to \mathbb R$
with maximum norm. The dual space  $(A,A^+)$ to $(V,V^+)$ can be identified with the subspace $A(K)\subseteq C(K)$ of  affine functions over $K$, with the cone of
positive functions $A^+=A(K)^+$, the order unit $\ou$ is the constant unit functional over $K$. The order unit norm in $A$
coincides with the maximum norm on $A(K)$. Let us denote by $P(K)\subseteq C(K)$ the cone of convex functions in $C(K)$.

 The dual space $C^*(K)$ is the space of signed Radon measures over $K$. Let us denote by  $\mathcal P(K)$
the set of Radon probability measures over $K$. Then $\mathcal P(K)$ is compact in the weak*-topology inherited from the
Banach space  duality with  $C(K)$, in fact,  $\mathcal P(K)$ is a Choquet simplex. 
For any $\sigma\in K$, the probability measure
concentrated in $\sigma$ belongs to $\mathcal P(K)$ and is denoted by $\delta_\sigma$. A
measure of the form $\sum_{i=1}^n c_i \delta_{\rho_i}$, $\rho_i\in K$, $c_i\in
\mathbb R$ is called simple. For $\sigma\in K$, we denote by $\mathcal
P_\sigma(K)$ the subset of probability measures $\mu\in \mathcal P(K)$ with barycenter
$\bar{\mu}=\int_K \rho d\mu(\rho)=\sigma$.

Recall that  the Choquet order on the set of Radon  measures over $K$ is defined as the dual of the ordering in $C(K)$ obtained
from the cone $P(K)$: we have $\nu\prec\mu$ if 
\[
\int fd\nu\le \int fd\mu,\qquad \forall f\in P(K).
\] 
 If $\nu \prec \mu$, then 
 $\bar\mu=\bar \nu$. A positive measure is maximal with respect to this ordering if and
 only if it 
is a boundary measure, that is,  concentrated on the extreme boundary
$\partial_e K$. We will denote the set of all boundary measures in $\mathcal P_\sigma(K)$ by
$\mathcal P_\sigma^b(K)$. Further, any positive measure is upper bounded by a positive boundary measure, in particular, any
element $\sigma\in
K$ is the barycenter of some measure $\mu \in \mathcal P^b_\sigma(K)$. For details
see e.g.
\cite{alfsen1971compact, phelps2001lectures}.

\subsection{Boundary measures and LHS models} Let $\{\rho_{a|x}\}$ be an assemblage with
barycenter $\sigma$. 
Observe that a LHS model of the form \eqref{eq:unsteerable} for   $\{\rho_{a|x}\}$  can be expressed as
\[
\rho_{a|x}= \int_K q(a|x, \rho)\rho d\mu(\rho),
\]
where $\mu=\sum_{\lambda\in \Lambda} q(\lambda)\delta_{\rho_\lambda}\in \mathcal
P_\sigma(K)$.

We will now identify $\{\rho_{a|x}\}$ with a set of simple probability measures in  $P_\sigma(K)$.
Put $\lambda_{a|x}:=\<\ou,\rho_{a|x}\>$ and $\sigma_{a|x}:=\lambda^{-1}_{a|x}\rho_{a|x}\in K$ 
(if $\lambda_{a|x}=0$ we may pick any state
$\sigma_{a|x}\in K$). For any $x$, put
\[
\mu_x:= \sum_{a\in \Omega_x} \lambda_{a|x}\delta_{\sigma_{a|x}}\in \mathcal P(K),\quad
\bar{\mu}_x=\sum_a \lambda_{a|x}\sigma_{a|x}=\sigma,
\]
so that we can represent the assemblage as a (finite) set
$\{\mu_x\}_{x=1}^g\subseteq \mathcal P_\sigma(K)$.  Conversely, for any such subset,
 $\{\lambda_{a|x}\sigma_{a|x}\}$ is an assemblage with barycenter
$\sigma$. Note that this representation, in contrast with the identification with
$S\otimesmax K$ in the previous section, ignores any permutation of the measurements or
relabelling of the outcomes, but in the context of steering these are irrelevant.
 Note also that the
convex structure induced from $\mathcal P_\sigma(K)$ through this representation is
different from the one used in the previous paragraphs. We
next show that existence of a LHS model can be expressed in terms of the Choquet order in
$\mathcal P_\sigma(K)$.

\begin{prop} \label{prop:simple_co} Let $\nu\in \mathcal P_\sigma(K)$ be a simple measure,
$\nu=\sum_a\lambda_a \delta_{\sigma_a}$. Then $\nu\prec \mu$ for some $\mu\in \mathcal
P_\sigma(K)$ if and only if there are measurable functions $q(a|\cdot): K\to [0,1]$ such
that $\sum_a q(a|\rho)=1$ and 
\[
\lambda_a \sigma_a= \int_K \rho q(a|\rho)d\mu(\rho).
\]

\end{prop}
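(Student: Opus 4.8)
The plan is to prove the two implications separately: the ``if'' direction reduces to the barycenter (Jensen) inequality for convex functions, while the ``only if'' direction rests on the dilation theorem for the Choquet order. I would begin with the easy direction. Suppose the measurable functions $q(a|\cdot)$ are given, and set $\mu_a := q(a|\cdot)\,\mu$, a positive measure dominated by $\mu$. Since $\sum_a q(a|\rho)=1$ we get $\mu=\sum_a \mu_a$; applying $\langle \ou,\cdot\rangle$ to the defining identity gives $\mu_a(K)=\int_K q(a|\rho)\,d\mu(\rho)=\langle\ou,\lambda_a\sigma_a\rangle=\lambda_a$, and the hypothesis says the barycenter of $\mu_a$ is $\lambda_a\sigma_a$. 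Hence for any $f\in P(K)$ the barycenter inequality applied to the probability measure $\lambda_a^{-1}\mu_a$ yields $\lambda_a f(\sigma_a)\le \int_K f\,q(a|\cdot)\,d\mu$ (trivially so when $\lambda_a=0$, since then $q(a|\cdot)=0$ $\mu$-a.e.). Summing over $a$ and using $\sum_a q(a|\cdot)=1$ gives $\int f\,d\nu=\sum_a\lambda_a f(\sigma_a)\le \int f\,d\mu$ for every convex $f$, which is precisely $\nu\prec\mu$.

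For the ``only if'' direction I would assume $\nu\prec\mu$ and invoke the dilation (Cartier--Fell--Meyer) theorem for the Choquet order, see \cite{phelps2001lectures}. Because $\nu=\sum_a\lambda_a\delta_{\sigma_a}$ is simple, this theorem produces a decomposition $\mu=\sum_a\lambda_a m_a$ in which each $m_a\in\mathcal P(K)$ has barycenter $\bar m_a=\sigma_a$. Each $\lambda_a m_a$ is then dominated by $\mu$, so the Radon--Nikodym theorem supplies measurable densities $q(a|\cdot):=d(\lambda_a m_a)/d\mu$. By construction $q(a|\cdot)\ge 0$ and $\sum_a q(a|\cdot)=d\mu/d\mu=1$ $\mu$-almost everywhere, so after adjusting on a $\mu$-null set each $q(a|\cdot)$ takes values in $[0,1]$; and since $q(a|\cdot)\,\mu=\lambda_a m_a$ as measures, $\int_K\rho\,q(a|\rho)\,d\mu(\rho)=\lambda_a\int_K\rho\,dm_a(\rho)=\lambda_a\bar m_a=\lambda_a\sigma_a$, as required. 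Atoms with $\lambda_a=0$ are harmless, the corresponding density being $0$ $\mu$-a.e.

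The main obstacle is exactly the splitting $\mu=\sum_a\lambda_a m_a$ with prescribed barycenters in the ``only if'' direction: the bare Choquet inequality $\int f\,d\nu\le\int f\,d\mu$ does not itself exhibit the required Markov kernel, and extracting it is the content of the dilation theorem. Once this splitting is in hand, the Radon--Nikodym step and the verification of nonnegativity, the normalization $\sum_a q(a|\cdot)=1$, and the barycenter identity are all routine. It is worth noting that the passage through finitely many measures $m_a$ is legitimate precisely because $\nu$ is simple, so the general measurable dilation collapses to a finite convex combination indexed by the atoms of $\nu$.
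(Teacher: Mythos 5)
Your proposal is correct and follows essentially the same route as the paper: the ``only if'' direction rests on the decomposition $\mu=\sum_a\lambda_a\mu_a$ with $\bar\mu_a=\sigma_a$ (you cite the Cartier--Fell--Meyer dilation theorem, the paper cites \cite[Cor.~I.3.4]{alfsen1971compact} --- the same standard fact) followed by Radon--Nikodym densities adjusted on a null set, and the ``if'' direction is the same barycenter (Jensen) inequality computation. Your explicit treatment of the atoms with $\lambda_a=0$ is a minor tidying-up that the paper leaves implicit.
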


\begin{proof} Assume that $\nu\prec \mu$. Since $\nu$ is simple, by 
\cite[Cor. I.3.4]{alfsen1971compact} this means that there is a convex decomposition 
 $\mu=\sum_a \lambda_{a} \mu_{a}$ such that $\mu_{a}\in \mathcal P_{\sigma_{a}}(K)$. 
 Let $f(a|\cdot)=\frac {d\mu_a}{d\mu}$   and put $q(a|\cdot )=\lambda_af(a|\cdot)$, then
 we may assume that $q(a|\rho)\in [0,1]$ and $\sum_a q(a|\rho)=1$ by suitably replacing
 the values of the functions $f(a|\cdot)$ on a subset $K_0\subseteq K$ with $\mu(K_0)=0$.
 Then
 \[
\lambda_a\sigma_a=\lambda_a\bar\mu_a=\lambda_a\int_K \rho d\mu_a(\rho)=\int_K\rho
q(a|\rho)d\mu(\rho).
 \]

Assume the converse, then $\lambda_a=\int_K q(a|\rho)d\mu$ so that
$\lambda^{-1}_aq(a|\cdot)d\mu$ defines a probability measure $\mu_a$ such that
$\sigma_a=\int \rho d\mu_a(\rho)$. For any $f\in P(K)$ we have
\[
\int f d\nu=\sum_a \lambda_a f(\sigma_a)\le \sum_a\lambda_a \int_K f(\rho)d\mu_a=\sum_a
\int_K f(\rho)q(a|\rho)d\mu=\int_K fd\mu.
\]

\end{proof}

We now extend the definition of an assemblage as an arbitrary set $\{\mu_x\}_{x\in X}$ of simple measures with a common barycenter, so that we no longer assume  the
parameter set $X$ to be finite. We say that an assemblage
$\{\sum_a\lambda_{a|x}\delta_{\sigma_{a|x}}\}_{x\in X}\subseteq \mathcal P_\sigma(K)$ admits
a LHS model (or is classical) if there is some $\mu\in \mathcal P_\sigma(K)$ and measurable functions
$q(a|x,\cdot): K\to [0,1]$, $\sum_a q(a|x,\cdot)=1$ for all $x\in X$, such that 
\begin{equation}\label{eq:LHS_inf}
\lambda_{a|x}\sigma_{a|x}=\int_K \rho q(a|x,\rho)d\mu(\rho),\qquad a\in \Omega_x,\ x\in X.
\end{equation}

The following theorem  collects some observations for this definition of a local hidden
state model and its relation to Choquet order. Note that the statement (iii)
shows that for finite assemblages this definition of a LHS model coincides with the
previous one from Section \ref{sec:lhs}.

\begin{thm}\label{thm:lhs_co}  Let $\{\mu_x\}_{x\in X}\subseteq \mathcal
P_\sigma(K)$ be an assemblage. Then
\begin{enumerate}
\item[(i)] The assemblage is classical if and only if all the measures $\mu_x$
have a common upper bound in Choquet order. In this case, any measure $\mu$ such that $\{\mu_x\}\prec
\mu$ defines some LHS model.
\item[(ii)] If $\{\mu_x\}\prec \mu$, then we may always assume that $\mu\in \mathcal
P_\sigma^b(K)$.
\item[(iii)] If $X$ is a finite set and $\{\mu_x\}\prec \mu$, then we may assume
that $\mu$ is simple.

\item[(iv)] The assemblage is classical  if and only if any finite sub-assemblage
$\{\mu_x\}_{x\in F}$, $F\subseteq X$, $|F|<\infty$ is
classical.
\item[(v)] Assume that  $\{\mu_x\}\prec \mu$. If the assemblage is invariant under an affine bijection $T:K\to K$
(that is, for all $x\in X$, $\mu_x^T=\mu_{x'}$ for some $x'\in X$),
then we may assume that $\mu$ is invariant under $T$. 

\end{enumerate}

\end{thm}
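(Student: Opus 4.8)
The plan is to treat all five items as consequences of Proposition~\ref{prop:simple_co} combined with standard facts about the Choquet order and the weak*-compactness of $\mathcal P_\sigma(K)$. The key reformulation is that, because each $\mu_x$ is simple, Proposition~\ref{prop:simple_co} says that $\mu_x\prec\mu$ (with $\mu\in\mathcal P_\sigma(K)$) holds precisely when $\mu$ supplies the $x$-th block of response functions in an LHS model, i.e.\ when there are measurable $q(a|x,\cdot):K\to[0,1]$ with $\sum_a q(a|x,\cdot)=1$ and $\lambda_{a|x}\sigma_{a|x}=\int_K\rho\,q(a|x,\rho)\,d\mu(\rho)$. For (i) I would simply observe that demanding this for every $x$ \emph{with one common $\mu$} is, word for word, the defining condition \eqref{eq:LHS_inf} of a classical assemblage; hence classicality is equivalent to the existence of a common upper bound $\mu$ in Choquet order, and any such $\mu$ yields an LHS model. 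Note that $\mu_x\prec\mu$ forces $\bar\mu=\bar\mu_x=\sigma$, so $\mu\in\mathcal P_\sigma(K)$ is automatic. Item (ii) then follows from transitivity of $\prec$ together with the preliminary fact that every positive measure lies below a boundary measure: if $\{\mu_x\}\prec\mu$ and $\mu\prec\mu'$ with $\mu'$ a boundary measure, then $\{\mu_x\}\prec\mu'$ and $\bar{\mu'}=\sigma$, so $\mu'\in\mathcal P^b_\sigma(K)$ and by (i) it defines an LHS model.

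For (iv) I would set $U_x:=\{\mu\in\mathcal P_\sigma(K):\mu_x\prec\mu\}$. Since every $f\in P(K)$ is continuous, each functional $\mu\mapsto\int f\,d\mu$ is weak*-continuous, so $U_x$ is a weak*-closed subset of $\mathcal P_\sigma(K)$; and $\mathcal P_\sigma(K)$ is itself weak*-compact, being a weak*-closed subset of the compact $\mathcal P(K)$ cut out by the (continuous) affine barycenter conditions. By (i) a finite sub-assemblage $\{\mu_x\}_{x\in F}$ is classical exactly when $\bigcap_{x\in F}U_x\neq\emptyset$, so the hypothesis of (iv) is precisely the finite intersection property for the closed family $\{U_x\}_{x\in X}$; compactness then gives $\bigcap_{x\in X}U_x\neq\emptyset$, i.e.\ the whole assemblage is classical. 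The converse direction is trivial, since a common $\mu$ restricts to any sub-assemblage.

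Item (iii) is the step I expect to be the main obstacle, since it is where one must pass from an integral LHS model with a general $\mu$ to a finitely supported one, and it is the only place that genuinely uses $\dim V<\infty$. Writing $q(\rho)=(q(a|x,\rho))_{a,x}\in S_\kk$ and introducing the continuous map $\Psi:K\times S_\kk\to V^{N}$, $N=\sum_x k_x$, given by $\Psi(\rho,t)=(t_{a|x}\rho)_{a,x}$, the equations \eqref{eq:LHS_inf} say exactly that the tuple $(\lambda_{a|x}\sigma_{a|x})_{a,x}$ is the barycenter of the push-forward of $\mu$ under $\rho\mapsto\Psi(\rho,q(\rho))$, a probability measure carried by the compact set $\Psi(K\times S_\kk)\subseteq V^N$. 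As $V^N$ is finite-dimensional, Carath\'eodory's theorem lets me replace this barycenter by a convex combination $\sum_i c_i\,\Psi(\rho_i,t_i)$ of finitely many points of $\Psi(K\times S_\kk)$; reading off $\mu'=\sum_i c_i\delta_{\rho_i}$ and $q'(a|x,\rho_i)=t_{i,a|x}$ then produces a \emph{simple} LHS model. The constraints $\sum_a t_{i,a|x}=1$ (valid because $t_i\in S_\kk$) ensure that the $q'$ are legitimate responses and that $\bar{\mu'}=\sigma$, and by Proposition~\ref{prop:simple_co} this simple $\mu'$ is the sought common upper bound. For dichotomic shape this is just the statement that the integral places $(\sigma,y)$ in $S_g\otimesmin K$ via \eqref{eq:cs_min1}, so one could alternatively invoke Theorem~\ref{thm:lhs}.

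Finally, for (v) I would first record that $T$-invariance forces $T(\sigma)=\sigma$: applying the affine map $T$ to barycenters gives $T(\sigma)=\overline{\mu_x^T}=\bar\mu_{x'}=\sigma$. Let $C:=\{\mu\in\mathcal P_\sigma(K):\{\mu_x\}\prec\mu\}$, which is nonempty by hypothesis, convex, and weak*-compact (as in (iv)). Push-forward by $T^{-1}$, namely $R:\mu\mapsto\mu^{T^{-1}}$, is an affine, weak*-continuous self-map of $\mathcal P_\sigma(K)$, and it maps $C$ into $C$: since $T$ is an affine bijection, push-forward both preserves and reflects $\prec$, so $\mu_x\prec\mu^{T^{-1}}$ is equivalent to $\mu_x^T=\mu_{x'}\prec\mu$, which holds for $\mu\in C$ because $T$ sends the assemblage into itself. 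By the Markov--Kakutani fixed point theorem (equivalently, by taking a weak*-cluster point of the Ces\`aro averages of $R$) there is $\mu_*\in C$ with $\mu_*^{T^{-1}}=\mu_*$, i.e.\ $\mu_*^T=\mu_*$; this $T$-invariant $\mu_*$ is the desired common upper bound. Apart from the discretization in (iii), every step reduces to a routine application of compactness or of Proposition~\ref{prop:simple_co}.
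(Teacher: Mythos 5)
Your items (i), (ii) and (iv) follow the paper's proof essentially verbatim (the paper phrases (iv) with sets $M_F=\bigcap_{x\in F}U_x$ indexed by finite $F$, but the finite-intersection-property argument is identical), so I will only comment on (iii) and (v), where you take genuinely different routes; both of your arguments are correct. For (iii) the paper does not discretize via Carath\'eodory: it uses the fact that $\mathcal P(K)$ is a Choquet simplex, so the $g$ convex decompositions $\mu_0=\sum_a\lambda_{a|x}\mu_{a|x}$ admit a common refinement $\{\mu_\omega\}_{\omega\in\Omega_1\times\dots\times\Omega_g}$, and the simple upper bound is $\sum_\omega q(\omega)\delta_{\bar\mu_\omega}$. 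That argument is dimension-free (consistent with the paper's announced extension to infinite-dimensional/metrizable $K$) and yields at most $\prod_x k_x$ atoms, whereas your Carath\'eodory argument genuinely needs $\dim V<\infty$, as you yourself note, and yields a bound of roughly $N\dim V+1$ atoms; within the finite-dimensional setting of the theorem your version is perfectly adequate, modulo the small point that if two of your $\rho_i$ coincide you must merge the atoms and average the corresponding $t_i$ (convexly, with weights $c_i$) so that $q'(a|x,\cdot)$ is well defined on the support of $\mu'$. For (v) the paper does not use Markov--Kakutani: it shows $C$ is stable under the pushforward action, observes that the affine bijections preserving the assemblage form a compact group $G$, and takes the Haar average $\mu_m=\int_G\mu^S\,dm(S)$, which is invariant under all of $G$ at once. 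Your fixed-point argument proves exactly the stated claim (invariance under the single map $T$), but the stronger group-invariant conclusion is what is actually invoked later in the proof of the steering-degree theorem (invariance under the whole group $G_\sigma$ of $\sigma$-preserving bijections); to recover that from your approach you would need a fixed-point theorem valid for the possibly non-abelian compact group $G$ (e.g.\ Kakutani's or Day's theorem), at which point the Haar average is the more economical device.
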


\begin{proof}
The statement (i) follows immediately from Proposition \ref{prop:simple_co}, (ii) follows
from the fact that any measure is upper bounded (in the Choquet order) by a boundary
measure. For (iii), let $\{\mu_x=\sum_x \lambda_{a|x}\delta_{\sigma_{a|x}}\}_{x=1}^g$ and 
assume a LHS model \eqref{eq:LHS_inf} for the assemblage with some measure $\mu_0$. As in
the proof of Proposition \ref{prop:simple_co}, let
$\mu_{a|x}=\lambda_{a|x}^{-1}q(a|x,\cdot)d\mu_0\in \mathcal P_{\sigma_{a|x}}(K)$.  Then for
all $x=1,\dots,g$ we have a convex decomposition 
$\sum_a \lambda_{a|x}\mu_{a|x}=\mu_0$. Since $\mathcal P(K)$ is a Choquet simplex, all the decompositions have a common refinement: there 
 are probability measures $\mu_\omega$ indexed by $\omega\in \Omega=\Omega_1\times\dots\times
 \Omega_g$ and some $q\in \mathcal P(\Omega)$ such that $\mu_0=\sum_\omega
 q(\omega)\mu_\omega$ and 
\begin{align*}
\lambda_{a|x}\mu_{a|x}&=\sum_{\omega, \omega_x=a}
q(\omega)\mu_{\omega} = \sum_\omega d(a|x,\omega)q(\omega)\mu_\omega,
\end{align*}
where $d(a|x,\omega)=1$ if $\omega_x=a$ and is 0 otherwise.
Put $\rho_\omega:=\bar{\mu}_\omega$, then we obtain 
\[
\rho_{a|x}=\lambda_{a|x}\sigma_{a|x}=\lambda_{a|x}\bar\mu_{a|x}=\sum_\omega
d(a|x,\omega)q(\omega)\rho_\omega,
\]
which is a LHS model with a simple measure $\mu:=\sum_\omega
q(\omega)\delta_{\rho_\omega}$.
 
To prove (iv), assume that any finite sub-assemblage $\{\mu_x\}_{x\in F}$ is
classical. By (i), this is equivalent to the fact that for any finite $F\subseteq X$, the subset 
\[
M_{ F}:=\{\mu\in \mathcal P_\sigma(K),\ \mu_x\prec \mu,\
\forall x\in  F\}
\]
is nonempty and it is easily seen from the definition of Choquet order that $M_{
F}$ is also closed, in the topology of $\mathcal P(K)$. Moreover, since for any finite
collection $F_i\subseteq \mathsf A$, $i=1,\dots,n$, we have $\bigcap_i M_{F_i}=M_{\cup_i F_i}$, we see that 
\[
\{M_{F},\ F\subseteq X\mbox{ is finite}\}
\]
is a collection of closed subsets in $\mathcal P(K)$ with the finite intersection
property. The statement now follows by compactness of $\mathcal P(K)$.

To prove (v), let $\{\mu_x\}\prec \mu$ and let $T:K\to K$ be an  affine bijection preserving $\{\mu_x\}_{x\in  X}$,
then for any $f\in P(K)$ and $x\in  X$,
\[
\int_K fd\mu_x=\int_K f(T^{-1}(\rho))d\mu_x^{T}(\rho)\le \int_K f\circ
T^{-1}d\mu=\int_K fd\mu^T
\]
so that $\mu_x\prec \mu^T$.  The set of all affine bijections
$K\to K$ form a compact group (in the
topology of pointwise convergence) of which the elements preserving the assemblage form a
compact subgroup $G$. It is easily checked that the map $G\to \mathcal P(K)$ given by 
$S\mapsto \mu^S$ is continuous. Let  $m$ be the Haar measure for $G$  and let
$\mu_{m}=\int_G \mu^Sdm(S)$, then $\mu_m$ is invariant under $T$ and we have for any $f\in
P(K)$ and $x\in  X$:
\[
\int_K fd\mu_m=\int_G \int_K f(\rho)d\mu^S(\rho)dm(S)\ge \int fd\mu_x.
\]
\end{proof}

We now give some further characterization  of the Choquet order in the case of simple
measures.

\begin{prop}\label{prop:simple_co_dual} Let $\mu,\nu\in \mathcal P(K)$ and assume that
$\nu=\sum_{a=1}^k\lambda_a\delta_{\sigma_a}$ is simple. Then $\nu\prec \mu$
if and only if for all $g_1,\dots,g_k\in A$ we have
\[
\sum_a \lambda_a \<g_a,\sigma_a\>\le \int_K (g_1\vee\dots\vee g_k)(\rho)d\mu.
\]

\end{prop}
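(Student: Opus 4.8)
The plan is to prove the two implications separately, using the definition that $\nu\prec\mu$ means $\int f\,d\nu\le\int f\,d\mu$ for all $f\in P(K)$, together with the elementary observation that the pointwise maximum $g_1\vee\dots\vee g_k$ of affine functions is convex and continuous, hence belongs to $P(K)$.

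For the forward implication, I would simply feed $f:=g_1\vee\dots\vee g_k$ into the definition of the Choquet order. Since $\nu$ is simple, the left-hand side becomes $\int f\,d\nu=\sum_a\lambda_a\,(g_1\vee\dots\vee g_k)(\sigma_a)$, and each term is bounded below by $\lambda_a g_a(\sigma_a)=\lambda_a\<g_a,\sigma_a\>$ because $(g_1\vee\dots\vee g_k)(\sigma_a)\ge g_a(\sigma_a)$. Combining this with $\int f\,d\nu\le\int f\,d\mu$ gives exactly the claimed inequality.

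For the converse, the key tool is that every $f\in P(K)$ is the pointwise supremum of its affine minorants: for a continuous convex $f$ on a compact convex set one has $f(\rho)=\sup\{g(\rho):g\in A,\ g\le f\}$ for every $\rho\in K$ (a standard separation argument in $V\times\mathbb R$, see \cite{alfsen1971compact, phelps2001lectures}). Fixing $f\in P(K)$ and $\epsilon>0$, I would choose for each $a$ an affine minorant $g_a\le f$ with $g_a(\sigma_a)\ge f(\sigma_a)-\epsilon$, and apply the hypothesis to this tuple $(g_1,\dots,g_k)$. On the left this yields $\sum_a\lambda_a\<g_a,\sigma_a\>\ge \sum_a\lambda_a f(\sigma_a)-\epsilon=\int f\,d\nu-\epsilon$, using $\sum_a\lambda_a=1$; on the right, $g_1\vee\dots\vee g_k\le f$ gives $\int (g_1\vee\dots\vee g_k)\,d\mu\le\int f\,d\mu$. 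Chaining the inequalities and letting $\epsilon\to 0$ produces $\int f\,d\nu\le\int f\,d\mu$, and since $f\in P(K)$ was arbitrary, $\nu\prec\mu$.

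The routine parts (the maximum being convex, the bookkeeping with $\sum_a\lambda_a=1$) are immediate; the one step deserving care is the existence of tight affine minorants at each $\sigma_a$, which is where I expect the only real content to lie. One must know that a continuous convex function on a compact convex set is recovered as the upper envelope of its continuous affine minorants; in finite dimensions this follows by separating the point $(\sigma_a,f(\sigma_a)-\epsilon)$ from the closed convex epigraph of $f$ by a non-vertical hyperplane, where continuity of $f$ guarantees the hyperplane can be taken non-vertical so that it corresponds to a genuine element of $A$.
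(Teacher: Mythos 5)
Your proof is correct and follows essentially the same route as the paper: the easy direction uses that the pointwise maximum of affine functions is convex, and the converse approximates an arbitrary $f\in P(K)$ from below by affine minorants that are (nearly) tight at each $\sigma_a$. Your $\epsilon$-approximate minorants are in fact a slight improvement in rigor over the paper's version, which asserts exact attainment $g_a(\sigma_a)=f(\sigma_a)$ --- something that can fail when $\sigma_a$ lies on the boundary of $K$ --- whereas the envelope identity $f=\sup\{g\in A:\ g\le f\}$ that you invoke is the correct general statement and suffices after letting $\epsilon\to 0$.
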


\begin{proof} Assume the inequality holds for all $g_1,\dots,g_k\in A$. Let $f\in P(K)$,
then there are affine functions $g_a\in A(K)=A$, $a=1,\dots,k$,  such that $g_a\le f$ and
$f(\sigma_a)=g_a(\sigma_a)$. Then we have  $\bigvee_{a'}g_{a'}\le f$
and therefore
\begin{align*}
\int fd\nu&=\sum_a \lambda_a f(\sigma_a)=\sum_a\lambda_a \<g_a,\sigma_a\>\le 
 \int_K
(\bigvee_{a'} g_{a'})(\rho)d\mu\le \int_K fd\mu.
\end{align*}
For the converse, assume that $\nu\prec \mu$, then
\[
\sum_a \lambda_a\<g_a,\sigma_a\>\le \sum_a \lambda_a \<\bigvee_{a'}g_{a'},\sigma_a\>=
\int (\bigvee_{a'}g_{a'})d\nu\le \int (\bigvee_{a'}g_{a'})d\mu,
\]
the last inequality follows from the fact that the maximum of affine functions is
convex.

\end{proof}

\subsection{Dichotomic assemblages and steering degree}\label{sec:co_sdegree}

We now restrict our attention to dichotomic assemblages
$\{\mu_{x}\}_{x\in X}\subseteq \mathcal P_{2,\sigma}(K)$, where $\mathcal
P_{2,\sigma}(K)$ denotes the subset of measures in $\mathcal P_\sigma(K)$ supported in two points. 
In this case we obtain a simpler characterization of the Choquet order.


\begin{lemma}\label{lemma:dichotomic_co} For $\nu\in \mathcal P_{2,\sigma}(K)$ and $\mu\in \mathcal P_{\sigma}(K)$, we have
$\nu\prec \mu$ if and only
if 
\[
\int_K |\<h,\rho\>|d\nu(\rho)\le \int_K |\<h,\rho\>|d\mu(\rho),\qquad \forall h\in A.
\]
\end{lemma}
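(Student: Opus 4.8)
The plan is to deduce the statement from Proposition \ref{prop:simple_co_dual}, specialized to the two-point case. Write $\nu=\lambda_+\delta_{\sigma_+}+\lambda_-\delta_{\sigma_-}$ with $\lambda_++\lambda_-=1$, $\lambda_\pm\ge 0$, and common barycenter $\lambda_+\sigma_++\lambda_-\sigma_-=\sigma=\bar\mu$.

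One implication is immediate and does not even use the two-point support: for any $h\in A$ the function $\rho\mapsto |\langle h,\rho\rangle|$ is the absolute value of a continuous affine function, hence continuous and convex, so it lies in $P(K)$. Therefore $\nu\prec\mu$ yields $\int_K|\langle h,\rho\rangle|\,d\nu\le \int_K|\langle h,\rho\rangle|\,d\mu$ directly from the definition of the Choquet order.

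For the converse I would invoke Proposition \ref{prop:simple_co_dual} with $k=2$: $\nu\prec\mu$ holds iff $\lambda_+\langle g_1,\sigma_+\rangle+\lambda_-\langle g_2,\sigma_-\rangle\le \int_K(g_1\vee g_2)\,d\mu$ for all $g_1,g_2\in A$. The key step is the change of variables $p:=\tfrac12(g_1+g_2)$, $h:=g_1-g_2$, together with the pointwise identity $(g_1\vee g_2)(\rho)=\langle p,\rho\rangle+\tfrac12|\langle h,\rho\rangle|$ coming from $\max(u,v)=\tfrac12(u+v)+\tfrac12|u-v|$. As $(g_1,g_2)$ ranges over $A\times A$, so does $(p,h)$. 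Since $\mu$ and $\nu$ share the barycenter $\sigma$, the affine part contributes exactly $\langle p,\sigma\rangle$ to both sides: on the right $\int_K\langle p,\rho\rangle\,d\mu=\langle p,\bar\mu\rangle=\langle p,\sigma\rangle$, and on the left $\lambda_+\langle p,\sigma_+\rangle+\lambda_-\langle p,\sigma_-\rangle=\langle p,\sigma\rangle$. These cancel, and the criterion collapses to the single-variable inequality $\lambda_+\langle h,\sigma_+\rangle-\lambda_-\langle h,\sigma_-\rangle\le \int_K|\langle h,\rho\rangle|\,d\mu$ required for all $h\in A$.

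It then remains to close the gap between this signed combination and the absolute-value hypothesis, which is where the assumption enters. By the triangle inequality and $\lambda_\pm\ge 0$, $\lambda_+\langle h,\sigma_+\rangle-\lambda_-\langle h,\sigma_-\rangle\le \lambda_+|\langle h,\sigma_+\rangle|+\lambda_-|\langle h,\sigma_-\rangle|=\int_K|\langle h,\rho\rangle|\,d\nu$, and the hypothesis bounds the last quantity by $\int_K|\langle h,\rho\rangle|\,d\mu$. I expect the only delicate point to be the bookkeeping that cancels the $p$-terms using the common barycenter; once the two-function criterion of Proposition \ref{prop:simple_co_dual} is reduced to a condition on a single $h$, the triangle inequality finishes the argument. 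The two-point support of $\nu$ is essential precisely because it lets the maximum of two affine functions be rewritten in terms of a single $|\langle h,\cdot\rangle|$.
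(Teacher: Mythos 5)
Your proof is correct and follows essentially the same route as the paper: both directions rest on the convexity of $\rho\mapsto|\langle h,\rho\rangle|$ for the easy implication, and on Proposition \ref{prop:simple_co_dual} with $k=2$ combined with the identity $\max(u,v)=\tfrac12(u+v+|u-v|)$ and the common barycenter for the converse. Your explicit change of variables $(g_1,g_2)\mapsto(p,h)$ and the final triangle inequality are just a reorganization of the paper's chain of inequalities, so no further comment is needed.
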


\begin{proof} Let $g_\pm\in A$, $\rho\in K$. Note that
\[
(g_+\vee g_-)(\rho)=\max\{\<g_+,\rho\>,\<g_-,\rho\>\}=\frac12 (|\<g_+-g_-,\rho\>|+
\<g_++g_-,\rho\>).
\]
If the inequality in the lemma  is satisfied, then we have
\begin{align*}
\lambda_+\<g_+,\sigma_+\>+\lambda_-\<g_-,\sigma_-\>&\le \int_K (g_+\vee g_-)(\rho)d\nu= \frac12\biggl(\int_K|\<g_+-g_-,\rho\>|d\nu(\rho)+
\<g_++g_-,\sigma\>\biggr) \\
&\le \frac12\biggl(\int_K|\<g_+-g_-,\rho\>|d\mu(\rho)+
\<g_++g_-,\sigma\>\biggr)
=\int_K (g_+\vee g_-)(\rho)d\mu
\end{align*}
By Proposition \ref{prop:simple_co_dual}, this implies  $\nu\prec \mu$. The converse holds
since $\rho\mapsto |\<h,\rho\>|$ is convex for any $h\in A$.

\end{proof}

\begin{lemma}\label{lemma:co_norm} For any $h\in A$ and $\mu\in \mathcal P_\sigma(K)$, we have
\[
\int_K |\<h,\rho\>|d\mu\le \|h\|^{\sigma}=\max_{\nu\in \mathcal P_{2,\sigma}(K)}\int_K |\<h,\rho\>|d\nu.
\]

\end{lemma}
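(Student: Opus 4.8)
The plan is to prove the claimed identity
\[
\int_K |\<h,\rho\>|d\mu\le \|h\|^{\sigma}=\max_{\nu\in \mathcal P_{2,\sigma}(K)}\int_K |\<h,\rho\>|d\nu
\]
by separating it into the inequality and the variational (equality) part. Recall that $\|h\|^{\sigma}=\sup_{\pm y\le\sigma}\<h,y\>$ is the base norm of the formally introduced system $(A,A^+,\sigma)$; by duality it is the order unit norm dual, and since the relevant ball is the set $\{y\in V:\ \pm y\le\sigma\}$, one checks that $\|h\|^{\sigma}=\sup\{\<h,y\>:\ -\sigma\le y\le\sigma\}$. First I would establish the inequality. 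For any $\mu\in\mathcal P_\sigma(K)$, write the sign of $\<h,\rho\>$ as $\varepsilon(\rho)=\mathrm{sgn}\<h,\rho\>\in\{\pm1\}$, so that $|\<h,\rho\>|=\varepsilon(\rho)\<h,\rho\>=\<h,\varepsilon(\rho)\rho\>$. Integrating, $\int_K|\<h,\rho\>|d\mu=\<h,\int_K \varepsilon(\rho)\rho\,d\mu(\rho)\>$, and the vector $y:=\int_K\varepsilon(\rho)\rho\,d\mu(\rho)$ satisfies $-\sigma\le y\le\sigma$ because $\sigma\pm y=\int_K(1\mp\varepsilon(\rho))\rho\,d\mu(\rho)\in V^+$ (each integrand is a nonnegative multiple of a state, and $\bar\mu=\sigma$). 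Hence $\int_K|\<h,\rho\>|d\mu=\<h,y\>\le\|h\|^{\sigma}$, giving the stated bound.

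Next I would prove that the supremum over $\mathcal P_{2,\sigma}(K)$ attains $\|h\|^{\sigma}$. The bound $\le$ is the special case of the inequality just proved. For the reverse, take $y^*$ with $-\sigma\le y^*\le\sigma$ and $\<h,y^*\>=\|h\|^{\sigma}$ (the supremum is attained by compactness of the order interval). Writing $\sigma+y^*=2\rho_+$ and $\sigma-y^*=2\rho_-$ with $\rho_\pm\in V^+$, one has $\<\ou,\rho_\pm\>=\tfrac12\<\ou,\sigma\pm y^*\>$; after normalizing $\rho_\pm$ to states and assigning the corresponding weights $\lambda_\pm=\<\ou,\rho_\pm\>$, the measure $\nu:=\lambda_+\delta_{\hat\rho_+}+\lambda_-\delta_{\hat\rho_-}$ lies in $\mathcal P_{2,\sigma}(K)$ (its barycenter is $\rho_++\rho_-=\sigma$). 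A short computation, using $\<h,\rho_+\>-\<h,\rho_-\>=\<h,y^*\>\ge 0$ and $\<h,\rho_+\>+\<h,\rho_-\>=\<h,\sigma\>$, shows $\int_K|\<h,\rho\>|d\nu\ge |\<h,\rho_+\>-\<h,\rho_-\>|=\|h\|^{\sigma}$, so the maximum is achieved.

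The main obstacle I anticipate is handling the measurability of the sign function $\varepsilon(\rho)$ and the degenerate cases where $\<h,\rho\>=0$ on a set of positive measure or where one of $\rho_\pm$ is zero. The sign function is Borel measurable since $\rho\mapsto\<h,\rho\>$ is continuous, so the integral $\int_K\varepsilon(\rho)\rho\,d\mu$ is well defined as a Bochner (finite-dimensional vector) integral; on the zero set we may set $\varepsilon(\rho)$ arbitrarily in $\{\pm1\}$ without affecting the value, which keeps the bound $-\sigma\le y\le\sigma$ intact. In the attainment step, if $\rho_+=0$ (i.e.\ $y^*=-\sigma$) then $\nu=\delta_{\hat\rho_-}$ is a one-point measure, which still belongs to $\mathcal P_{2,\sigma}(K)$ as a degenerate two-point measure; choosing the second atom to be any state with weight zero resolves the formal requirement. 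Once these measure-theoretic technicalities are dispatched, the rest is a direct duality computation.
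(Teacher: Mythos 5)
Your proof is correct and follows essentially the same route as the paper: the upper bound via the signed decomposition $y=\mu_+-\mu_-$ with $\pm y\le\sigma$ (your $\varepsilon(\rho)$ is just the indicator splitting $K$ into $K_\pm$), and attainment via a two-point measure built from $\sigma\pm y^*$; your use of the triangle inequality plus the already-proved upper bound neatly replaces the paper's direct claim that $\<h,y_+-y_-\>=|\<h,y_+\>|+|\<h,y_-\>|$ at the optimum. Only a cosmetic slip: $\sigma\pm y=\int_K(1\pm\varepsilon(\rho))\rho\,d\mu(\rho)$, not $1\mp\varepsilon(\rho)$, which does not affect the argument.
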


\begin{proof} Let $K_\pm=\{\rho\in K, \pm\<h,\rho\>\ge 0\}$, then 
\[
\int_K|\<h,\rho\>|d\mu(\rho)=\int_{K_+} \<h,\rho\>d\mu-\int_{K_-}\<h,\rho\>d\mu=
\<h,\mu_+-\mu_-\>
\]
where $\mu_\pm=\int_{K_\pm}\rho d\mu(\rho)$. Since $\mu_++\mu_-=\sigma$, we have
$\pm(\mu_+-\mu_-)\le \sigma$, so that $\<h,\mu_+-\mu_-\>\le \|h\|^\sigma$.
Since $\|\cdot\|^{\sigma}$ is a base norm with respect to the order unit $\sigma$, there are some $y_\pm \in V^+$ such
that $y_++y_-=\sigma$ 
and 
\[
\|h\|^{\sigma}=\<h,y_+-y_-\>=|\<h,y_+\>|+ |\<h,y_-\>|=\int_K |\<h,\rho\>|d \nu(\rho),
\] 
where $ \nu\in \mathcal P_{2,\sigma}(K)$.

\end{proof}

We also have  an alternative characterization of the witness set $\mathcal
W_{2,\sigma}^g$.

\begin{lemma}\label{lemma:witnesses} Let $w\in A^g$. Then $w\in
\mathcal W_{2,\sigma}^g$ if and
only if  for all $\mu\in \mathcal
P_\sigma(K)$,
\[
\sum_x\int_K  |\<w_x,\rho\>|d\mu(\rho)\le 1.
\]
\end{lemma}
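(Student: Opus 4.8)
The plan is to prove the two implications separately, in each case passing between an element $y\in V^g$ with $\|y\|_{steer,\sigma}\le 1$ and a probability measure $\mu\in\mathcal P_\sigma(K)$. The bridge in both directions is the characterization of the witness set from Proposition \ref{prop:2sigma_witnesses}(ii): $w\in\mathcal W_{2,\sigma}^g$ holds if and only if $\sum_x|\langle w_x,y_x\rangle|\le 1$ for every $y$ with $\|y\|_{steer,\sigma}\le 1$. Thus I only have to match, for each $\mu$, a suitable $y$ and vice versa, so that the quantity $\sum_x|\langle w_x,y_x\rangle|$ meets or exceeds $\sum_x\int_K|\langle w_x,\rho\rangle|\,d\mu$.

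For the ``only if'' direction I would fix $\mu\in\mathcal P_\sigma(K)$ and set $\epsilon_x(\rho):=\operatorname{sgn}\langle w_x,\rho\rangle\in\{-1,1\}$, which is Borel measurable since $\rho\mapsto\langle w_x,\rho\rangle$ is continuous. Defining $y_x:=\int_K\epsilon_x(\rho)\rho\,d\mu(\rho)\in V$ gives $\langle w_x,y_x\rangle=\int_K|\langle w_x,\rho\rangle|\,d\mu$, hence $\sum_x|\langle w_x,y_x\rangle|=\sum_x\int_K|\langle w_x,\rho\rangle|\,d\mu$. The crucial step is to check $\|y\|_{steer,\sigma}\le 1$: the map $\rho\mapsto(1,\epsilon(\rho))\otimes\rho$ takes values in the separable set $S_g\otimesmin K$, because $(1,\epsilon(\rho))\in S_g$ (as $\|\epsilon(\rho)\|_\infty=1$) and $\rho\in K$, while the barycenter of the pushforward of $\mu$ along this map is exactly $\bigl(\int_K\rho\,d\mu,\int_K\epsilon(\rho)\otimes\rho\,d\mu\bigr)=(\sigma,y)$. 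Since $S_g\otimesmin K$ is compact and convex, this barycenter lies in it, so $(\sigma,y)\in S_g\otimesmin K$, which by Theorem \ref{thm:lhs_norm} means $\|y\|_{steer,\sigma}\le 1$. Proposition \ref{prop:2sigma_witnesses}(ii) then yields $\sum_x\int_K|\langle w_x,\rho\rangle|\,d\mu\le 1$, and $\mu$ was arbitrary.

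For the ``if'' direction I would take any $y$ with $\|y\|_{steer,\sigma}\le 1$; by Theorem \ref{thm:lhs_norm} and the representation \eqref{eq:cs_min} of $S_g\otimesmin K$ I may write $y=\sum_j z_j\otimes\phi_j$ with $\|z_j\|_\infty=1$, $\phi_j\in V^+$ and $\sum_j\phi_j\le\sigma$. Setting $c_j=\langle\ou,\phi_j\rangle$, $\rho_j=c_j^{-1}\phi_j$ (discarding the $c_j=0$ terms) and absorbing the deficiency $\phi_0:=\sigma-\sum_j\phi_j\in V^+$ as an extra atom $c_0\delta_{\rho_0}$ produces a measure $\mu=\sum_j c_j\delta_{\rho_j}+c_0\delta_{\rho_0}\in\mathcal P_\sigma(K)$ of total mass one and barycenter $\sigma$. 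Since $|z_{j,x}|\le 1$ I get $|\langle w_x,y_x\rangle|\le\sum_j|\langle w_x,\phi_j\rangle|$, whence $\sum_x|\langle w_x,y_x\rangle|\le\sum_x\sum_j|\langle w_x,\phi_j\rangle|\le\sum_x\int_K|\langle w_x,\rho\rangle|\,d\mu\le 1$ by hypothesis. As $y$ was arbitrary, Proposition \ref{prop:2sigma_witnesses}(ii) gives $w\in\mathcal W_{2,\sigma}^g$.

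The main obstacle is the ``only if'' direction: verifying that the single element $y$ assembled from the sign functions satisfies $\|y\|_{steer,\sigma}\le 1$, and not merely $\|y_x\|_\sigma\le 1$ for each $x$ (the latter would only give the weaker injective bound). This is exactly where the common measure $\mu$ matters, since it is $\mu$ that couples the different measurements $x$ into one separable element of $S_g\otimesmin K$; the argument relies on the measurability of $\epsilon_x$ and on the vector integral being taken in the finite-dimensional space $V^{g+1}$, so that the barycenter of a probability measure supported in the compact convex set $S_g\otimesmin K$ again lies in that set.
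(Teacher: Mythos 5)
Your proof is correct, but it takes a genuinely different route from the paper's in both directions. For the ``only if'' part the paper does not pass through Proposition \ref{prop:2sigma_witnesses}(ii) at all: it uses condition (i), taking $w_0\in A^+$ with $\langle w_0,\sigma\rangle=1$ and $\sum_x\varepsilon_xw_x\le w_0$ for every sign pattern, noting the pointwise bound $\sum_x|\langle w_x,\rho\rangle|=\sum_x\varepsilon_x(\rho)\langle w_x,\rho\rangle\le\langle w_0,\rho\rangle$, and integrating against $\mu$. That is shorter than your construction of $y_x=\int_K\epsilon_x(\rho)\rho\,d\mu$, but your barycenter argument is sound; it could even be made more elementary by partitioning $K$ into the finitely many Borel sets $K_\varepsilon=\{\rho:\epsilon(\rho)=\varepsilon\}$ and writing $y=\sum_{\varepsilon\in\{\pm1\}^g}\varepsilon\otimes\int_{K_\varepsilon}\rho\,d\mu$, which exhibits a decomposition witnessing $\|y\|_{steer,\sigma}\le 1$ straight from the definition, with no appeal to compactness of $S_g\otimesmin K$. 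For the ``if'' part the paper invokes the Choquet-order machinery of Section \ref{sec:choquet}: a classical dichotomic assemblage $\{\mu_x\}$ has a common upper bound $\mu$ and $\rho\mapsto|\langle w_x,\rho\rangle|$ is convex, so $\int|\langle w_x,\cdot\rangle|\,d\mu_x\le\int|\langle w_x,\cdot\rangle|\,d\mu\le 1$; you instead manufacture a simple measure directly from the decomposition $y=\sum_jz_j\otimes\phi_j$ guaranteed by \eqref{eq:cs_min}, which is self-contained and bypasses the Choquet order entirely, at the cost of being tied to the finite-dimensional, simple-measure situation in which that decomposition is available. The minor technical points in your write-up (measurability of the sign functions, discarding the $\phi_j$ with $\langle\ou,\phi_j\rangle=0$, and the extra atom absorbing $\sigma-\sum_j\phi_j$) all check out.
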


\begin{proof} Assume that $(w_1,\dots, w_g)\in \mathcal W_{2,\sigma}^g$, then by
Proposition \ref{prop:witness} there is some $w_0\in A^+$, $\<w_0,\sigma\>=1$ such that 
$\sum_x\epsilon_xw_x \le w_0$ for all $\epsilon \in \{\pm1\}^g$. For any $\rho\in K$, there
is some $\epsilon \in \{\pm1\}^g$ such that
\[
\sum_x|\<w_x,\rho\>|=\sum_x\epsilon_x \<w_x,\rho\>\le \<w_0,\rho\>.
\]
For any $\mu\in \mathcal P_\sigma(K)$ we obtain
\[
\sum_x \int_K|\<w_x,\rho\>|d\mu\le \int_K\<w_0,\rho\>d\mu=\<w_0,\sigma\>=1.
\]

For the converse, let
$\{\mu_x=\lambda\delta_{\sigma_{+|x}}+(1-\lambda)\delta_{\sigma_{-|x}}\}_{x=1}^g$ be a
classical
dichotomic assemblage. Note that the corresponding element in
$S_g\otimesmin K$ has the form $(\sigma,y)$ with
$y_x=\lambda\sigma_{+|x}-(1-\lambda)\sigma_{-|x}$. Let $\mu\in \mathcal P_\sigma(K)$ be
such that $\mu_x\prec\mu$ for $x=1,\dots,g$, then by the
definition of Choquet order 
\[
\sum_x |\<w_x,y_x\>|\le \sum_x
\lambda|\<w_x,\sigma_{+|x}\>|+(1-\lambda)|\<w_x,\sigma_{-|x}\>|=
\sum_x \int_K |\<w_x,\rho\>|d\mu_x\le \sum_x \int_K |\<w_x,\rho\>|d\mu\le 1.
\]
The assertion now follows from Prop. \ref{prop:2sigma_witnesses}.

\end{proof}

We now obtain an expression for the  universal  steering degree $s_\sigma$.

\begin{thm}\label{thm:sdegree} For  $\mu\in \mathcal P_\sigma(K)$, let 
\[
c_\mu:=\inf_{h\in A,\|h\|^{\sigma}=1} \int_K |\<h,\rho\>|d\mu(\rho).
\]
Then $c_\mu\le s_\sigma$. There exists a boundary measure $\mu\in \mathcal
P_\sigma(K)$ such that $c_\mu=s_\sigma$, invariant under any affine bijection $K\to K$
that preserves $\sigma$.

\end{thm}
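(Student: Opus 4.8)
The plan is to prove the inequality $c_\mu \le s_\sigma$ for every $\mu$ directly from the witness characterization, and then to obtain a \emph{maximizing} measure through a minimax argument, afterwards upgrading it first to a boundary measure and then to an invariant one.

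First I would establish $c_\mu \le s_\sigma$. Fix $\mu \in \mathcal{P}_\sigma(K)$ and take any witness $w=(w_1,\dots,w_g)\in \mathcal{W}_{2,\sigma}^g$. By homogeneity of the defining infimum, $c_\mu\,\|w_x\|^\sigma \le \int_K |\langle w_x,\rho\rangle|\,d\mu$ for each $x$. Summing over $x$ and invoking Lemma \ref{lemma:witnesses}, which asserts precisely $\sum_x \int_K |\langle w_x,\rho\rangle|\,d\mu \le 1$, we obtain $c_\mu \sum_x \|w_x\|^\sigma \le 1$. As this holds for all witnesses and all $g$, the witness expression \eqref{eq:ssigma_witness} for $s_\sigma$ gives $c_\mu \le s_\sigma$, and in particular $\sup_\mu c_\mu \le s_\sigma$.

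The core step is the reverse inequality together with attainment, which I would obtain by minimax. Let $\Phi\subseteq C(K)$ be the convex hull of the convex functions $\{\rho\mapsto |\langle h,\rho\rangle| : \|h\|^\sigma=1\}$; a typical element is $\phi=\sum_x p_x |\langle \tilde w_x,\cdot\rangle|$ with $p_x\ge 0$, $\sum_x p_x=1$, $\|\tilde w_x\|^\sigma=1$. Reparametrizing a witness as $w_x = t\,p_x\,\tilde w_x$ with $t=\sum_x\|w_x\|^\sigma$, the constraint of Lemma \ref{lemma:witnesses} reads $t\cdot\max_\mu \int_K\phi\,d\mu\le 1$; since Lemma \ref{lemma:co_norm} already yields $\max_\mu\int_K\phi\,d\mu\le 1$, one checks that $\sup_{w\in\mathcal{W}_{2,\sigma}^g}\sum_x\|w_x\|^\sigma\ge 1$ for every $g$ and, after taking the infimum over $g$, that
\[
s_\sigma = \inf_{\phi\in\Phi}\ \max_{\mu\in\mathcal{P}_\sigma(K)}\int_K \phi\,d\mu.
\]
Since $\phi\mapsto\int_K\phi\,d\mu$ is linear, the infimum over $\Phi$ equals the infimum over the generators, so $\sup_\mu c_\mu = \sup_\mu \inf_{\phi\in\Phi}\int_K\phi\,d\mu$. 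Now $(\phi,\mu)\mapsto\int_K\phi\,d\mu$ is affine in each variable, weak*-continuous in $\mu$ and norm-continuous in $\phi$, while $\mathcal{P}_\sigma(K)$ is convex and weak*-compact; Sion's minimax theorem then gives $\inf_\phi\max_\mu=\max_\mu\inf_\phi$, i.e. $\sup_\mu c_\mu = s_\sigma$. As $\mu\mapsto c_\mu$ is an infimum of weak*-continuous affine functions it is upper semicontinuous, hence attains its maximum on the compact set $\mathcal{P}_\sigma(K)$, producing $\mu^\ast$ with $c_{\mu^\ast}=s_\sigma$. I expect the identification of $s_\sigma$ with $\inf_\phi\max_\mu$ and the verification of the minimax hypotheses to be the main obstacle.

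To make the maximizer a boundary measure I would use that $c$ is monotone in the Choquet order: if $\nu\prec\mu$ then $\int_K|\langle h,\rho\rangle|\,d\nu\le\int_K|\langle h,\rho\rangle|\,d\mu$ for each $h$ (the integrand is convex), hence $c_\nu\le c_\mu$. Choosing a boundary measure $\mu^b\in\mathcal{P}_\sigma^b(K)$ with $\mu^\ast\prec\mu^b$ gives $c_{\mu^b}\ge c_{\mu^\ast}=s_\sigma$, and the first part forces equality. For invariance, let $G$ be the compact group of affine bijections $T\colon K\to K$ fixing $\sigma$, as in Theorem \ref{thm:lhs_co}(v). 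Each $T$ induces an isometry $h\mapsto T^\ast h$ of $(A,\|\cdot\|^\sigma)$, since the linear extension of $T$ to $V$ preserves $V^+$ and fixes $\sigma$, hence the order interval $[-\sigma,\sigma]$ defining $\|\cdot\|^\sigma$; substituting $h'=T^\ast h$ then shows $c_{\mu^T}=c_\mu$. Averaging over the Haar measure $m$ of $G$, set $\mu_m=\int_G (\mu^b)^T\,dm(T)$. Then $\mu_m$ is $G$-invariant; it remains a boundary measure because each $(\mu^b)^T$ is supported on $T(\partial_e K)=\partial_e K$; and by concavity of $c$ (a Jensen inequality for the barycentric integral $\mu_m$) together with $c_{(\mu^b)^T}=c_{\mu^b}=s_\sigma$, we get $c_{\mu_m}\ge s_\sigma$, whence $c_{\mu_m}=s_\sigma$. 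This $\mu_m$ is the desired boundary, invariant maximizer.
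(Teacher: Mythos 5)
Your proof is correct, but the existence part takes a genuinely different route from the paper. The first inequality $c_\mu\le s_\sigma$ is argued exactly as in the paper (homogeneity plus Lemma \ref{lemma:witnesses} and \eqref{eq:ssigma_witness}). For attainment, the paper is constructive: with $s=s_\sigma$ it forms the infinite family $\{\mu_{s,\nu}\}_{\nu\in\mathcal P_{2,\sigma}(K)}$ of shrunk two-point measures, observes via Theorem \ref{thm:lhs_co}(iv) (the finite-intersection/compactness statement) that this is a classical assemblage, takes a common Choquet upper bound that is simultaneously boundary and $G_\sigma$-invariant by parts (ii) and (v), and then verifies $c_\mu\ge s_\sigma$ directly from the dichotomic Choquet-order characterization of Lemma \ref{lemma:dichotomic_co}, so the optimal measure is literally a universal hidden-state distribution. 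You instead identify $s_\sigma$ with the minimax value $\inf_{\phi\in\Phi}\max_{\mu}\int_K\phi\,d\mu$ through the reparametrization $w_x=t\,p_x\tilde w_x$ of witnesses, apply Sion's theorem on the compact convex set $\mathcal P_\sigma(K)$, get attainment from upper semicontinuity of $\mu\mapsto c_\mu$, and only afterwards upgrade the maximizer to a boundary measure (Choquet monotonicity of $c$, since $\rho\mapsto|\langle h,\rho\rangle|$ is convex) and to an invariant one (Haar averaging plus concavity of $c$ and the isometry $h\mapsto T^*h$ of $(A,\|\cdot\|^\sigma)$). Both arguments ultimately rest on weak* compactness of $\mathcal P_\sigma(K)$; yours buys a clean duality statement $\sup_\mu c_\mu=s_\sigma$ and avoids Lemma \ref{lemma:dichotomic_co} and Theorem \ref{thm:lhs_co} entirely, at the cost of the extra minimax identification and of losing the explicit LHS-model interpretation of the optimal measure. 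All the post-processing steps you sketch (preservation of the barycenter under Choquet order, the average of boundary measures being boundary, measurability of the Haar integrand) do hold in the finite-dimensional metrizable setting of the paper, so I see no gap.
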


\begin{proof}
Let $\mu\in \mathcal P_\sigma(K)$. Note that $c_\mu$ is the largest $c\in [0,1]$ such that 
 $c \|h\|^{\sigma}\le \int_K |\<h,\rho\>|d\mu(\rho)$, for all $h\in A$. 
Let $(w_1,\dots,w_g)\in
\mathcal W_{2,\sigma}^g$, then by Lemma \ref{lemma:witnesses} we have
\[
c_\mu\sum_x\|w_x\|^\sigma\le \sum_x\int_K|\<w_x,\rho\>|d\mu(\rho)\le 1,
\]
so that $c_\mu\le s_\sigma$ by \eqref{eq:ssigma_witness}. 

We now prove existence of the measure such that equality is attained. Let $s=s_\sigma$ and
let $\nu=\lambda\delta_{\sigma_+}+(1-\lambda)\delta_{\sigma_-}\in \mathcal P_{2,\sigma}(K)$. 
Put $\rho_{+|\nu}=\lambda\sigma_+$, $\rho_{-|\nu}=(1-\lambda)\sigma_-$ and let 
$\mu_{s,\nu}\in \mathcal P_{2,\sigma}(K)$ be the measure corresponding to
$\{s\rho_{\pm|\nu}+(1-s)\frac12\sigma\}$. By definition of $s_\sigma$ and  Theorem
\ref{thm:lhs_co} (iv),  we see that $\{\mu_{s,\nu}\}_{\nu\in \mathcal P_{2,\sigma}(K)}$ is a
classical dichotomic assemblage, moreover, it is clearly invariant 
under the group $G_\sigma$ of  affine bijections that preserve $\sigma$. Using again Theorem \ref{thm:lhs_co},
we see that there is some  measure $\mu\in \mathcal P^b_\sigma(K)$, invariant under
$G_\sigma$ and such that
\[
\mu_{s,\nu}\prec \mu,\qquad \forall \nu\in \mathcal P_{2,\sigma}(K).
\]
Let $h\in A$ and assume that  $h\notin \pm A^+$. Then there are some $\rho_{\pm}\in V^+$,
$\rho_++\rho_-=\sigma$ such that $\|h\|^\sigma= |\<h,\rho_+\>|+|\<h,\rho_-\>|=\int
|\<h,\rho\>|d\nu$ and
$\pm\<h,\rho_\pm\>\ge 0$, here  $\nu\in \mathcal P_{2,\sigma}(K)$ is  the corresponding
measure. We then have
\[
\int_K |\<h,\rho\>|d\mu\ge \int_K |\<h,\rho\>|d\mu_{s,\nu}=\<h,s(\rho_+-\rho_-)\>=s\|h\|^\sigma
\]
If  $h\in \pm A^+$, then 
\[
\int_K |\<h,\rho\>|d\mu=|\<h,\sigma\>|=\|h\|^\sigma\ge s\|h\|^\sigma.
\]
It follows that $c_\mu\ge s$. By the first part of the proof, we now have
$c_\mu=s_\sigma$.

\end{proof}

\begin{ex}\label{ex:quantum_s}
For a quantum system $(M_d^{sa},M_n^+,I)$,  $s_\sigma$ is the same for all faithful states
$\sigma$. This follows by using the map $\sigma^{-1/2}\cdot \sigma^{-1/2}$ in point
\eqref{itm:QMST_inc} on p. \pageref{itm:QMST_inc}
to map assemblages to measurements  and
the relation to compatibility degree, see also \cite{bluhm2021maximal}. So we may choose $\sigma = n^{-1} I$ and then
$\|h\|^\sigma=n^{-1}\|h\|_\Tr$, where $\|h\|_\Tr$ is the trace norm. The universal
dichotomic steering degree (and
compatibility degree) is then given by
\[
s_\sigma=\inf_{\|h\|_\Tr=n} \int_{\mathcal P_n} |\<\psi,h\psi\>|d\mu(\psi),
\] 
where $\mu$ is the unique unitarily invariant measure on the set $\mathcal P_n$ of pure
states. This corresponds to the results of \cite{bluhm2021maximal}, where the infimum was
also evaluated.  

\end{ex}

\begin{ex}\label{ex:cs_s} Let $(V,V^+,\ou)$ be the centrally symmetric system given by the norm $\|\cdot\|$ in
$\mathbb R^n$ and let $\sigma$ be the central element $\sigma=(1,0)$. Note that in this
case the dichotomic assemblages are the same as sets of dichotomic measurements for 
the centrally symmetric system given by the norm $\|\cdot\|^*$, so by the results of
\cite{bluhm2020incompatibility} we have that $s_{(0,1)}=\pi_1^{-1}$ where $\pi_1$ is the 1-summing
constant for the norm $\|\cdot\|^*$. 

We now check this in our setting.  The base $K^\sigma$  of $A^+$ is the dual state space isomorphic to the unit ball of the
dual norm
\[
K^{(1,0)}=K^*=\{(1,\psi),\ \|\psi\|^*\le 1\}
\]
and the base norm is then $\|(t,\varphi)\|^{(1,0)}=\max\{|t|,\|\varphi\|^*\}$.
The boundary measures $\nu\in \mathcal P^b_{(1,0)}(K)$ which are invariant under affine bijections
preserving $(1,0)$  correspond to regular Borel probability measures on the set $C=\overline{\partial_eB}$
that are invariant under isometries of $\|\cdot\|$. Let $(t,\varphi)\in A$, $\|(t,\varphi)\|^{(1,0)}=1$, then
for any such $\nu$,
\begin{align*}
1&\ge \int_K |\<(t,\varphi),(1,x)\>|d\nu=\int_B|t+\<\varphi,x\>|d\nu=\frac12 \int_B
|t+\<\varphi,x\>|+ |t-\<\varphi,x\>|d\nu\\
&=\int_B \max\{|t|,|\<\varphi,x\>|\}d\nu\ge \int_B
|\<\varphi,x\>|d\nu=\int_K|\<(0,\varphi),(1,x)\>|d\nu
\end{align*}
(the second equality holds since $\nu$ is invariant under the map $x\mapsto -x$). If
$\|\varphi\|^*\le |t|$,
then we have $|\<\varphi,x\>|\le |t|$ for all $x\in B$ and the integral is equal to $|t|=1$. If
$\|\varphi\|^*\ge |t|$, then $\|\varphi\|^*=1$.
It follows that  infimum in the definition of $c_\nu$ is attained at an element with $t=0$ and we have
\[
c_\nu =\inf_{\|\varphi\|^*=1} \int_{C} |\<\varphi,x\>|d\mu\le \pi^{-1}_1.
\]
Here the  inequality
follows from \cite[Thm. 1]{gordon1969onpabsolutely}, moreover, equality is attained for
some invariant probability  measure $\nu_0$ on $C$.  It follows from our results that
there is some invariant boundary measure $\mu$ such that 
\[
\pi^{-1}_1=c_{\nu_0}\le s_{(1,0)}=c_\mu\le \pi^{-1}_1,
\]
so that, indeed,  $\pi^{-1}_1=s_{(1,0)}$.

\end{ex}

\section{Unsteerable states in GPTs}

Let $(V_A,V_A^+,\ou_A)$ and $(V_B,V_B^+,\ou_B)$ be two system and let $\sigma_{AB}\in
K_A\otimesmax K_B$. For any measurement $f$ on the system $V_A$, let $\nu_f\in \mathcal
P_{\sigma_B}(K_B)$ denote the
 simple measure given by the conditional states $\{(f_{a}\otimes id_B)(\sigma_{AB})\}$.
By Theorem \ref{thm:lhs_co}, $\sigma_{AB}$ is $(A\to B)$ unsteerable (by a
specified type of measurements) if and only if the assemblage of conditional states
$\{\nu_f\}$, parametrized by the set of all  measurements
$f$  (of the specified type) is classical: there is some measure $\mu \in \mathcal
P^b_{\sigma_B}(K_B)$ such that
for all $f$, 
\[
\{\nu_f\}\prec \mu.
\]
Assume that  $U_A:K_A\to K_A$ and $U_B:K_B\to K_B$ are  affine bijections such that
$U_A\otimes U_B$ preserves $\sigma_{AB}$. 
If also $U_A$ preserves the specified family of measurements, then since
\[
U_B((f_{a}\otimes id_B)(\sigma_{AB}))=(f_{a}\circ U_A^{-1}\circ U_A\otimes U_B)(\sigma_{AB})=
(f_{a}\circ U_A^{-1}\otimes id_B)(\sigma_{AB}),
\]
the assemblage is invariant under $U_B$, so we may assume that $\mu$ is an invariant
boundary probability measure. 

Note that the state $\sigma_{AB}$ determines an affine map  $S_{A\to B}: K_A^{\sigma_A}\to
K_B$ by
\[
\<h_A\otimes h_B,\sigma_{AB}\>=\<h_B,S_{A\to B}(h_B)\>,\qquad h_A\in K_A^{\sigma_A},\
h_B\in E_B.
\]
Similarly, there is an affine map $S_{B\to A}: K_B^{\sigma_B}\to K_A$. Here the marginal
states are not necessarily interior points of the state space, in that case we can
restrict to the generated faces of $K_A$ and $K_B$, the corresponding base normed spaces
and their duals. Below we assume $\sigma_A\in \mathrm{int}(K_A)$ and  $\sigma_B\in
\mathrm{int}(K_B)$  for simplicity.  We then have 
$(f_a\otimes id_B)(\sigma_{AB})=S_{A\to B}(f_a)$. By Lemma \ref{lemma:meas_assemb}, the
family of  measurements $\{f_a\}$ is (formally) and assemblage which admits a LHS model
 if and
only if the family is compatible. The following is now immediate.

\begin{prop} The state $\sigma_{AB}$ is unsteerable by a family of measurements $\mathcal
F$ if and only if  $S_{A\to B}$ is $\mathcal
F$-incompatibility breaking: it maps $\mathcal F$ to an assemblage with a LHS model. 

\end{prop}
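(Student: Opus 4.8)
The plan is to observe that, after unwinding the definitions, the two sides of the claimed equivalence are literally the same assertion about the conditional-state assemblage determined by $\sigma_{AB}$ and $\mathcal F$. First I would record the identification $(f_a\otimes id_B)(\sigma_{AB})=S_{A\to B}(f_a)$ and note that $S_{A\to B}$ is an affine channel from the formal state space $K_A^{\sigma_A}$ into $K_B$. The two properties I would verify for this are quick: positivity of $S_{A\to B}(h_A)$ for $h_A\in A_A^+$ follows because $\<h_B,S_{A\to B}(h_A)\>=\<h_A\otimes h_B,\sigma_{AB}\>\ge 0$ for all $h_B\in E_B$ (using $\sigma_{AB}\in K_A\otimesmax K_B$), and normalization follows from $\<\ou_B,S_{A\to B}(h_A)\>=\<h_A,\sigma_A\>$. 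In particular the barycenter maps as $S_{A\to B}(\ou_A)=\sigma_B$.

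Next I would spell out that, by Lemma \ref{lemma:meas_assemb} applied to $A$, the family $\mathcal F=\{f\}$ of measurements on $A$ is (formally) an assemblage for $(A_A,A_A^+,\sigma_A)$ with barycenter $\ou_A$, and that $S_{A\to B}$, extended linearly, sends this formal assemblage to an assemblage on $B$ with barycenter $\sigma_B$ whose elements are exactly the subnormalized conditional states $\{(f_a\otimes id_B)(\sigma_{AB})\}$, i.e.\ the simple measures $\{\nu_f\}_{f\in\mathcal F}\subseteq\mathcal P_{\sigma_B}(K_B)$. Thus the image of $\mathcal F$ under $S_{A\to B}$ is precisely this conditional-state assemblage. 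Invoking Theorem \ref{thm:lhs_co}, $\sigma_{AB}$ is $(A\to B)$ unsteerable by $\mathcal F$ exactly when $\{\nu_f\}_{f\in\mathcal F}$ admits a common Choquet upper bound, that is, is classical. On the other hand, by the definition given in the statement, $S_{A\to B}$ is $\mathcal F$-incompatibility breaking precisely when it maps $\mathcal F$ to an assemblage admitting a LHS model, which by the identification above is the same assemblage $\{\nu_f\}$. Hence the two conditions coincide and the equivalence follows.

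I expect no serious obstacle; the paper already flags the statement as immediate, and the real work is careful bookkeeping of the identifications between effects, conditional states, and the channel $S_{A\to B}$. The only point deserving a sentence of comment is the justification of the name \emph{incompatibility breaking}: by Lemma \ref{lemma:meas_assemb} applied to $A$, classicality of the formal assemblage $\mathcal F$ is equivalent to compatibility of the measurements in $\mathcal F$, so $S_{A\to B}$ acts as a channel carrying this formal assemblage to $\{\nu_f\}$, and the requirement that the image be classical says exactly that $S_{A\to B}$ destroys any incompatibility present in $\mathcal F$. Under the standing assumptions $\sigma_A\in\mathrm{int}(K_A)$ and $\sigma_B\in\mathrm{int}(K_B)$ all the relevant dual and formal systems are well defined, so no degeneracy issues arise.
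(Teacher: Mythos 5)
Your proposal is correct and follows the same route as the paper: establish the identity $(f_a\otimes id_B)(\sigma_{AB})=S_{A\to B}(f_a)$, note via Lemma \ref{lemma:meas_assemb} that $\mathcal F$ is formally an assemblage for $(A_A,A_A^+,\sigma_A)$, and observe that its image under $S_{A\to B}$ is exactly the conditional-state assemblage, so the two classicality conditions coincide. The paper treats this as immediate after the same identifications; your extra bookkeeping (positivity, normalization, barycenters) is harmless and fills in what the paper leaves tacit.
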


Combining with Proposition \ref{prop:simple_co_dual}, we obtain the following condition
(this should be compared with the condition in \cite[Theorem 1]{nguyen2018quantum} in the
quantum case).

\begin{thm}\label{thm:unsteerable1} Let $\mathcal F$ be a set of measurements with $k$
outcomes. A  state $\sigma_{AB}$ is $\mathcal F$-unsteerable if
and only if for any boundary measure $\mu\in \mathcal P^b_{\sigma_B(K)}$, any measurement $f\in
\mathcal F$ and any $g_1,\dots, g_k\in A$, we have
\[
\sum_a \<g_a,S_{A\to B}(f_a)\>\le \int (g_1\vee\dots \vee g_k)d\mu.
\]

\end{thm}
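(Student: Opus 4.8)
The plan is to read the theorem as the dual, measure-theoretic reformulation of the two results already established, Proposition~\ref{prop:simple_co_dual} and Theorem~\ref{thm:lhs_co}. Each measurement $f\in\mathcal F$ determines the simple measure $\nu_f=\sum_{a}\lambda_{a|f}\delta_{\sigma_{a|f}}\in\mathcal P_{\sigma_B}(K)$ whose atoms satisfy $\lambda_{a|f}\sigma_{a|f}=(f_a\otimes\mathrm{id}_B)(\sigma_{AB})=S_{A\to B}(f_a)$, so that $\sum_a\lambda_{a|f}\<g_a,\sigma_{a|f}\>=\sum_a\<g_a,S_{A\to B}(f_a)\>$ for all $g_1,\dots,g_k\in A$. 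Hence, for a \emph{fixed} $\mu$, the displayed inequality ranging over all $g_1,\dots,g_k$ is by Proposition~\ref{prop:simple_co_dual} exactly the relation $\nu_f\prec\mu$. The theorem is therefore equivalent to a bookkeeping statement about the Choquet order, and the only delicate point will be the quantifier attached to $\mu$.

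For the implication that the inequality forces unsteerability I would simply instantiate a single boundary measure. If the inequality holds (in particular for one $\mu\in\mathcal P^b_{\sigma_B}(K)$, which exists because every point of $K$ is the barycenter of a boundary measure), then letting the $g_a$ vary gives $\nu_f\prec\mu$ for every $f\in\mathcal F$ by Proposition~\ref{prop:simple_co_dual}; thus $\{\nu_f\}_{f\in\mathcal F}$ has a common Choquet upper bound, and Theorem~\ref{thm:lhs_co}(i) furnishes a LHS model, i.e.\ $\sigma_{AB}$ is $\mathcal F$-unsteerable. A universal quantifier on $\mu$ only strengthens the hypothesis, so this direction is unproblematic.

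The reverse implication is where I expect the genuine obstacle. Starting from $\mathcal F$-unsteerability, Theorem~\ref{thm:lhs_co}(i)--(ii) yields \emph{some} boundary measure $\mu_0\in\mathcal P^b_{\sigma_B}(K)$ with $\nu_f\prec\mu_0$ for all $f$, and Proposition~\ref{prop:simple_co_dual} then converts this into the displayed inequality for that particular $\mu_0$; when a symmetry $U_A\otimes U_B$ fixes $\sigma_{AB}$ one may, by Theorem~\ref{thm:lhs_co}(v), take $\mu_0$ invariant. The difficulty is that this construction cannot be promoted to \emph{every} boundary measure: distinct boundary measures with the common barycenter $\sigma_B$ are as a rule incomparable in the Choquet order---already on a square state space the two measures carried by the two diagonal pairs of vertices share the barycenter yet neither dominates the other, and more generally a boundary measure is $\prec$-maximal, so $\nu_f\prec\mu$ can hold for a nontrivial boundary $\mu$ only when $\nu_f=\mu$. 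Consequently the measure in the statement must be the one produced existentially in the paragraph preceding the theorem, and the quantifier structure that the Choquet machinery actually delivers is: there is a boundary measure $\mu\in\mathcal P^b_{\sigma_B}(K)$ such that for all $f\in\mathcal F$ and all $g_1,\dots,g_k\in A$ the inequality holds, with $f$ and the $g_a$ universally quantified. Establishing the equivalence with this reading is what I would carry out, flagging the existential role of $\mu$ as the crux rather than attempting the literal universal-over-$\mu$ version, which the incomparability of boundary measures shows to be too strong.
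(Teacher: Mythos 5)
Your proof is correct and is essentially the paper's own: the theorem carries no separate proof there, being obtained by ``combining'' Theorem \ref{thm:lhs_co} with Proposition \ref{prop:simple_co_dual} exactly as you do, via the observation that, for fixed $\mu$, the displayed inequality over all $g_1,\dots,g_k$ is precisely $\nu_f\prec\mu$. Your reading of the quantifier is also the intended one: the paragraph preceding the theorem states unsteerability as ``there is \emph{some} measure $\mu\in\mathcal P^b_{\sigma_B}(K_B)$ such that for all $f$, $\nu_f\prec\mu$'', so the theorem's ``for any boundary measure'' must be parsed existentially, and your square example correctly shows that the literal universal-over-$\mu$ reading is false. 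One slip in your supporting argument, though: maximality of a boundary measure $\mu$ constrains what lies \emph{above} $\mu$, not below it, so $\nu\prec\mu$ with $\mu$ boundary certainly does not force $\nu=\mu$ (e.g.\ $\delta_{\sigma_B}\prec\mu$ always, and Theorem \ref{thm:lhs_co}(ii) rests precisely on boundary measures dominating non-boundary ones). What your counterexample actually uses is that the relevant $\nu_f$ is \emph{itself} a boundary, hence $\prec$-maximal, measure: with $\mu_1,\mu_2$ the two diagonal vertex measures of the square and $\nu_f=\mu_1$, the relation $\nu_f\prec\mu_2$ would force $\mu_2=\nu_f=\mu_1$, a contradiction, which is the incomparability you need.
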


If $\mathcal F$ is the set of dichotomic measurements, we obtain the following
characterization by properties of the map $S_{B\to A}$, or its extension to a positive map
$A_B\to V_A$. Note that the inequality is similar to the principal radius
\cite[Eq.(320)]{uola2020quantum} for qubit systems.

\begin{thm}\label{thm:unsteerable} A state $\sigma_{AB}\in K_A\otimesmax K_B$ is unsteerable by dichotomic
measurements if and only if there is a boundary measure $\mu\in \mathcal
P_{\sigma_B}^b(K_B)$ such that for all $h\in A_B$,
\[
\|S_{B\to A}(h)\|_{V_A}\le \int_K |\<h,\rho\>|d\mu(\rho).
\]
In particular, this is true if $\|S_{B\to A}(h)\|_{V_A}\le s_{\sigma_B}\|h\|^{\sigma_B}$ for
all $h$. 
\end{thm}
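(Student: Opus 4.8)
The plan is to translate the Choquet-theoretic characterization of unsteerability into an inequality for the adjoint map $S_{B\to A}$. By the discussion preceding Theorem~\ref{thm:unsteerable1} (which rests on Theorem~\ref{thm:lhs_co}), the state $\sigma_{AB}$ is unsteerable by dichotomic measurements exactly when the family $\{\nu_f\}$ of simple measures attached to all dichotomic $f$ admits a common upper bound $\mu\in\mathcal P^b_{\sigma_B}(K_B)$ in Choquet order. For a dichotomic $f=(f_+,f_-)$ the measure $\nu_f\in\mathcal P_{2,\sigma_B}(K_B)$ is supported on the normalized conditional states obtained from $\rho_{\pm|f}=S_{A\to B}(f_\pm)=(f_\pm\otimes\mathrm{id}_B)(\sigma_{AB})$, so I would first invoke Lemma~\ref{lemma:dichotomic_co} to rewrite $\nu_f\prec\mu$ as the requirement that $|\langle h,\rho_{+|f}\rangle|+|\langle h,\rho_{-|f}\rangle|\le\int_{K_B}|\langle h,\rho\rangle|\,d\mu(\rho)$ hold for every $h\in A_B$.

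The key simplification comes from writing $\rho_{\pm|f}=\tfrac12(\sigma_B\pm y_f)$ with $y_f:=\rho_{+|f}-\rho_{-|f}$ and using the elementary identity $|p+q|+|p-q|=2\max\{|p|,|q|\}$ with $p=\tfrac12\langle h,\sigma_B\rangle$ and $q=\tfrac12\langle h,y_f\rangle$. This gives
\[
\int_{K_B}|\langle h,\rho\rangle|\,d\nu_f(\rho)=|\langle h,\rho_{+|f}\rangle|+|\langle h,\rho_{-|f}\rangle|=\max\{|\langle h,\sigma_B\rangle|,|\langle h,y_f\rangle|\}.
\]
Since $\mu$ has barycenter $\sigma_B$, convexity of $t\mapsto|t|$ forces $|\langle h,\sigma_B\rangle|\le\int_{K_B}|\langle h,\rho\rangle|\,d\mu$ for free, so the barycenter term never constrains $\mu$. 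Hence $\nu_f\prec\mu$ for all dichotomic $f$ is equivalent to $|\langle h,y_f\rangle|\le\int_{K_B}|\langle h,\rho\rangle|\,d\mu$ for all $f$ and all $h$.

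It remains to identify $\sup_f|\langle h,y_f\rangle|$ with the stated norm. Using the adjoint relation $\langle h,S_{A\to B}(f_A)\rangle=\langle f_A,S_{B\to A}(h)\rangle=\langle f_A\otimes h,\sigma_{AB}\rangle$ together with $y_f=S_{A\to B}(f_+-f_-)$, and noting that $f_A:=f_+-f_-$ ranges precisely over the centered effects $\{f_A:\pm f_A\le\ou_A\}$ as $f$ runs over dichotomic measurements, the base-norm formula in $V_A$ yields
\[
\sup_f|\langle h,y_f\rangle|=\sup_{\pm f_A\le\ou_A}\langle f_A,S_{B\to A}(h)\rangle=\|S_{B\to A}(h)\|_{V_A}.
\]
Because the right-hand side $\int_{K_B}|\langle h,\rho\rangle|\,d\mu$ does not depend on $f$, I can interchange the quantifiers: the condition ``$|\langle h,y_f\rangle|\le\int|\langle h,\rho\rangle|\,d\mu$ for all $f,h$'' becomes ``$\|S_{B\to A}(h)\|_{V_A}\le\int_{K_B}|\langle h,\rho\rangle|\,d\mu$ for all $h$'', which is the asserted equivalence, with $\mu$ taken to be a boundary measure by Theorem~\ref{thm:lhs_co}(ii).

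Finally, the ``in particular'' clause follows because Theorem~\ref{thm:sdegree} supplies a boundary measure $\mu\in\mathcal P_{\sigma_B}(K_B)$ with $c_\mu=s_{\sigma_B}$, that is, $\int_{K_B}|\langle h,\rho\rangle|\,d\mu\ge s_{\sigma_B}\|h\|^{\sigma_B}$ for all $h$; hence the hypothesis $\|S_{B\to A}(h)\|_{V_A}\le s_{\sigma_B}\|h\|^{\sigma_B}$ immediately verifies the criterion for this $\mu$. The only genuinely delicate point is the third step---the correct bookkeeping of the adjoint identity together with the fact that dichotomic measurements are parametrized exactly by the order interval $[-\ou_A,\ou_A]$---since everything else is either a direct citation or the one-line $\max$ identity.
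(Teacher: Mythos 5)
Your proposal is correct and follows essentially the same route as the paper: reduce via Lemma~\ref{lemma:dichotomic_co} to the condition $\int|\langle h,\rho\rangle|\,d\nu_f\le\int|\langle h,\rho\rangle|\,d\mu$, identify the supremum of the left-hand side over dichotomic $f$ with $\|S_{B\to A}(h)\|_{V_A}$ via the adjoint identity and the base-norm formula, and invoke Theorem~\ref{thm:sdegree} for the final clause. Your intermediate rewriting with the identity $|p+q|+|p-q|=2\max\{|p|,|q|\}$ and the observation that the barycenter term is automatic is only a cosmetic variant of the paper's direct computation $|\langle f_+,S_{B\to A}(h)\rangle|+|\langle f_-,S_{B\to A}(h)\rangle|$, not a different argument.
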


\begin{proof} 
Let $\{f_\pm\}$ be a dichotomic measurement (that is, $f_\pm \in E_B$ and $f_++f_-=\ou_B$)
and let $\nu_f\in \mathcal P_{2,\sigma_B}(K_B)$ be the measure corresponding to
$\{(f_\pm\otimes id)(\sigma_{AB})\}$. 
By Lemma \ref{lemma:dichotomic_co}, we see that $\sigma_{AB}$ is unsteerable
if and only if there is some $\mu\in \mathcal P^b_{\sigma_B}(K_B)$ such that for all $h\in
A_B$ and all measurements $\{f_\pm\}$,
\[
\int |\<h,\rho\>|d\nu_f\le \int  |\<h,\rho\>|d\mu.
\]
The integral on the left has the form
\[
\int |\<h,\rho\>|d\nu_f=|(f_+\otimes h)(\sigma_{AB})|+ |(f_-\otimes h)(\sigma_{AB})|=
|\<f_+,S_{B\to A}(h)\>|+|\<f_-,S_{B\to A}(h)\>|
\]
and the supremum over all dichotomic measurements is equal to  $\|S_{B\to A}(h)\|_{V_A}$.
The last statement follows from Theorem \ref{thm:sdegree}.

\end{proof}

\section{Conclusions}

We have studied steering in the setting of general probabilistic theories. For dichotomic
measurements, we proved that steering can be characterized and quantified in terms of
certain Banach space tensor cross norms, analogously to compatibility of dichotomic
measurements. In the general case, we have shown that steering can be conveniently treated 
using the classical Choquet theory for  probability measures on the compact and convex
state space.

We used this setting for some alternative characterization of LHS models. 
For dichotomic assemblages with a fixed barycenter, we found a variational  expression for 
the universal steering degree that generalizes the expressions known from quantum systems
and centrally symmetric systems. We also considered characterizations of bipartite states
that are unsteerable  and obtained conditions similar to those recently proved for
quantum systems.

Our results can be immediately applied to the study of compatible measurements, in
particular a similar formula can be found for the $g$-independent incompatibility degree
for dichotomic measurements that was only lower bounded by the 1-summing constant in
\cite{bluhm2020incompatibility}. Observe also that similar results hold also for compact convex
subsets in arbitrary (infinite dimensional) locally convex spaces and can be easily
extended beyond finite outcome measurements.

\section*{Acknowledgement}
This work was supported by the grant VEGA 1/0142/20 and  the Slovak Research and
Development Agency grant APVV-20-0069.

\end{document}